\newtheorem{theorem}{Theorem}[section]
\newtheorem{lemma}[theorem]{Lemma}
\newcommand{\ie}{{\it i.e., }}
\newcommand{\eg}{{\it e.g., }}
\newcommand{\etal}{{\it et~al. }}
\newcommand{\SEE}{HEET\xspace}
\newcommand{\SEEL}{Homogeneous Equivalent Execution Time}
\begin{document}

\title{HEET: A Heterogeneity Measure to Quantify the Difference across Distributed Computing Systems}

\author{
Ali Mokhtari \IEEEmembership{Student Member, IEEE}, 
Saeid Ghafouri \IEEEmembership{Student Member, IEEE}, 
Pooyan Jamshidi \IEEEmembership{Member, IEEE}, and 
Mohsen Amini Salehi \IEEEmembership{Member, IEEE}
\thanks{This work was supported by the National Science Foundation under awards\# CNS-2007209, 2007202, 2107463, 2233873, and 2038080. (\textit{Corresponding authors: Ali Mokhtari, Mohsen Amini Salehi})}
\thanks{Ali Mokhtari is with the School of Computing and Informatics, University of Louisiana at Lafayette, Louisiana, USA. Email: ali.mokhtari1@louisiana.edu}
\thanks{Saeid Ghafouri is with Queen Mary University of London, London, UK. Email: s.ghafouri@qmul.ac.uk}
\thanks{Pooyan Jamshidi is with the Computer Science and Engineering Department, University of South Carolina, South Carolina, USA. Email: pjamshid@cse.sc.edu}
\thanks{Mohsen Amini Salehi was with the School of Computing and Informatics, University of Louisiana at Lafayette, Louisiana, USA. He is now with the Computer Science and Engineering Department, University of North Texas, Texas, USA. Email: mohsen.aminisalehi@unt.edu}
}

\markboth{}
{Ali Mokhtari, Saeid Ghafouri \MakeLowercase{\textit{et al.}}: HEET: A Heterogeneity Measure to Quantify the Difference across Distributed Computing Systems}


\maketitle

\begin{abstract}


Although system heterogeneity has been extensively studied in the past, there is yet to be a study on measuring the impact of heterogeneity on system performance. For this purpose, we propose a heterogeneity measure that can characterize the impact of the heterogeneity of a system on its performance behavior in terms of throughput or makespan. We develop a mathematical model to characterize a heterogeneous system in terms of its task and machine heterogeneity dimensions and then reduce it to a single value, called Homogeneous Equivalent Execution Time (HEET), which represents the execution time behavior of the entire system. We used AWS EC2 instances to implement a real-world machine learning inference system. Performance evaluation of the HEET score across different heterogeneous system configurations demonstrates that HEET can accurately characterize the performance behavior of these systems. In particular, the results show that our proposed method is capable of predicting the true makespan of heterogeneous systems without online evaluations with an average precision of 84\%. This heterogeneity measure is instrumental for solution architects to configure their systems proactively to be sufficiently heterogeneous to meet their desired performance objectives. 


\end{abstract}

\begin{IEEEkeywords}
Heterogeneous Computing, Mathematical Models, Performance Analysis, Distributed Computing Systems
\end{IEEEkeywords}

\section{Introduction} 
\subsection{Research Motivations and Goals}
Heterogeneity has been an indispensable aspect of distributed computing throughout the history of these systems. 
In the modern era, as Moore's law is losing momentum due to power density and heat dissipation limitations~\cite{taylor2012dark, esmaeilzadeh2011dark}, heterogeneous computing systems have attracted even more attention to overcome the slowdown in Moore's law and fulfill the desire for higher performance in various types of distributed systems. In particular, with the ubiquity of accelerators (\eg GPUs and TPUs) and domain-specific computing (through ASICs \cite{taylor2020asic} and FPGA \cite{bobda2022future}), the matter of heterogeneity and harnessing it has become a more critical challenge than ever before to deal with. 

The footprint of heterogeneity can be traced in all forms of distributed systems. Hyperscaler cloud providers, such as AWS and Microsoft Azure, offer and operate based on a wide range of ``machine types'', ranging from general-purpose X86 and ARM machines to FPGAs and accelerators. Cloud users can use this heterogeneity to mitigate their costs and improve QoS. For example, Amazon SageMaker~\cite{Amazon_SageMaker} operates based on heterogeneous cloud machines to build, train, and deploy machine learning (ML) models. As an example, ResNet-50 training in a heterogeneous cluster with GPUs (\texttt{ml.g5.xlarge}) and compute-optimized (\texttt{ml.c5n.2xlarge}) machines yields 13\% lower cost than in a homogeneous cluster with only \texttt{ml.g5.xlarge} GPUs~\cite{ASageMaker_Resnet}. In edge computing, domain-specific accelerators (ASICs and FPGAs) and general-purpose processors are commonly used to perform near-data real-time processing. For example, Google smartglasses Enterprise Edition 2 \cite{google_glass} is equipped with a System-on-Chip (SoC) that includes a multicore ARM CPU, a GPU, and a Qualcomm AI Engine to provide onboard computer vision~\cite{google_glass, Qualcomm}. 
In the HPC context, the deployment of architecturally heterogeneous machines has become prevalent to fulfill the power and performance desires~\cite{cardwell2020truly}. As just one example, the HPE Cray EX architecture combines third-generation AMD EPYC CPUs (optimized for HPC and AI) with AMD Instinct 250X accelerators~\cite{TOP500}. 

As noted above, heterogeneity is key to improving various performance objectives of distributed systems, such as cost, energy consumption, and throughput. Therefore, harnessing system heterogeneity has been a long-standing goal in distributed systems. Previous research works have predominantly aimed to optimize a certain performance metric (\eg energy consumption, deadline meet rate, or throughput) with respect to the heterogeneity of the underlying computing systems. Examples of such work are task scheduling \cite{mokhtari2022felare,denninnart2020efficient}, load balancing \cite{perez2021sigmoid,khalid2019troodon,nozal2020enginecl}, and cloud elasticity \cite{vandebon2019enhanced, zhong2020cost}. However, to our knowledge, \emph{ there is no concrete study of the performance of these works when changing the degree (level) of heterogeneity in the underlying system}. That is, the impact of the same system-level solutions (\eg cloud elasticity, scheduling methods) between computing systems with various degrees of heterogeneity is unknown.

\begin{figure*}[!t]
\centering
\subfloat[]{\includegraphics[width=0.43\textwidth]{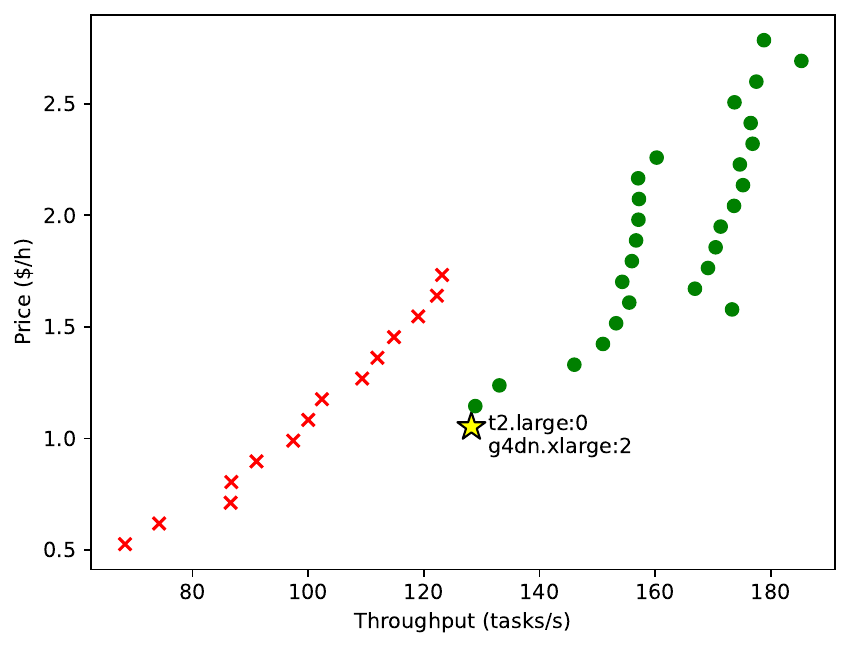}%
\label{fig:cost_opt_a}}
\hfil
\subfloat[]{\includegraphics[width=0.43\textwidth]{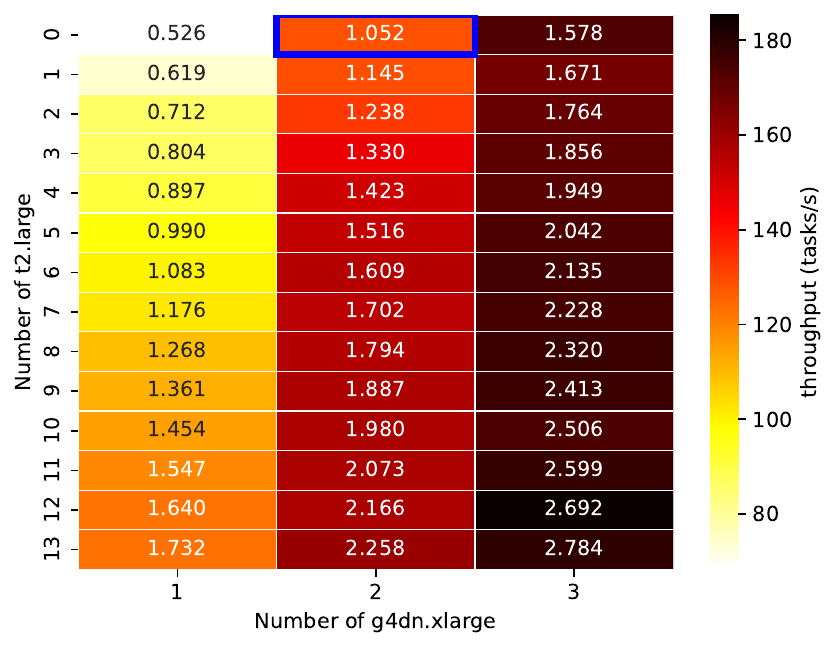}%
\label{fig:cost_opt_b}}
\caption{Illustrating the optimal system configuration that minimizes the cost while meeting the throughput target for a workload of 1000 tasks of speech recognition, image classification, object detection, and question-answering types. Different numbers of AWS EC2 instances of \texttt{t2.large} and \texttt{g4dn.xlarge} types are used to form heterogeneous system configurations. In Figure~\ref{fig:cost_opt_a}, the green points represent the system configurations that meet the throughput target ($\ge$125 tasks/sec), and the red points violate this constraint. The golden star point is the optimal configuration that meets the throughput target with minimum cost. In Figure~\ref{fig:cost_opt_b}, numbers in each cell show the cost of each configuration, and the colors illustrate the performance. Darker red represents higher throughput. The cell with a blue frame is the optimal configuration.}
\label{fig:cost_opt}
\vspace{-4mm}
\end{figure*}

In the context of cloud elasticity, for a given cloud-based application, solution architects need to know the implications of modifying (\ie adding, removing, or replacing) the allocated machines on the application performance in terms of throughput and cost incurred. The system architects must answer questions such as ``For a certain arriving workload pattern, what type of machine(s) must be allocated to the existing machines so that the user-defined performance objectives (\eg cost and throughput) are satisfied?'' They need the ability to \emph{compare} performances of different (potentially) heterogeneous systems in terms of their performance objectives (\eg incurred costs and throughput constraint) before committing their allocation decisions. Figure \ref{fig:cost_opt} shows the throughput of different heterogeneous systems, featuring varying numbers of AWS EC2 instances of \texttt{t2.large} and \texttt{g4dn.xlarge} types, along with the corresponding incurred cost. In Figure \ref{fig:cost_opt_a}, the configurations (\ie a certain number of \texttt{t2.large} and \texttt{g4dn.xlarge} instances) that meet the throughput constraint (\ie 125 tasks/sec) are shown in green, and those that do not satisfy the throughput constraint are in red. While there are multiple viable configurations that meet the throughput constraint, there is an optimal configuration that minimizes the incurred cost. Figure \ref{fig:cost_opt_b} also shows the number of instances with the corresponding price. These observations indicate that to minimize the cost incurred by different heterogeneous computing systems, we need to be able to quantify the impact of heterogeneity on the performance metric (\eg throughput), then we can configure a heterogeneous system with the minimum cost. Consequently, in this study, we introduce a novel heterogeneity measure aimed at making the performance (\ie throughput) of heterogeneous systems predictable. Performance predictability empowers solution architects to estimate the throughput of various system configurations offline. Consequently, they can identify the optimal configuration that minimizes cost while meeting the throughput target.

The influence of system heterogeneity on system-level solutions calls for a measurable metric to quantify the heterogeneity of a computing system. Although heterogeneous computing has been extensively investigated in the past, there is yet to be a concrete measure to quantify the heterogeneity of a system and characterize its performance so that different heterogeneous systems become comparable. As such, \emph{our goal in this study is to propose a `` performance-driven heterogeneity measure'' that can characterize the impact of the heterogeneity level of a system on its performance objective (\ie throughput) and make the system comparable with its counterparts}. 

\subsection{Problem Statement and Summary of Contributions}
We define a heterogeneous computing system as a set of architecturally diverse machines that work together to complete a set of requests (a.k.a. \emph{tasks}) with different computational requirements. We categorize the tasks that arrive at a system based on the type of operation they perform and call them \emph{task types}. For example, in a system that helps blind and visually impaired people \cite{mokhtari2022felare,zobaed2022edge}, the task types can be obstacle detection, face recognition, and speech recognition. Moreover, we classify machines of a computing system based on their architectural and performance characteristics and call each one a \emph{machine type}. For example, a cloud solution architect can form a virtual compute cluster using ARM, x86-based, and GPU machine types. In this work, we consider the system heterogeneity that emanates from the diversity in machine types and computational requirements of task types. That is, system heterogeneity has two dimensions: (i) \emph{machine heterogeneity} and (ii) \emph{task heterogeneity}. 
Profiling various task types on heterogeneous machine types can describe their execution time behavior.
Variations in the performance (a.k.a. execution time) of a given task type across all the machine types are defined as machine heterogeneity, whereas variations in the execution time of different task types on a given machine type are defined as task heterogeneity. Then, the \emph{system heterogeneity} is defined as the compounded heterogeneity of these two dimensions.

To quantify the heterogeneity of such computing systems in a manner that can describe the performance behavior of the system in terms of makespan and throughput, we need to address the following research questions:
(i) \emph{How to find a measure that can quantify the heterogeneity of any given computing system?} (ii) \emph{How to exploit the measure of heterogeneity to predict the performance behavior of a computing system?}

We define \emph{Expected Execution Time (EET)} as a matrix to store the expected execution time of each type of task on each type of machine. In this way, the entry $EET[i,j]$ (denoted $e_{ij}$) represents the expected execution time of the task type $i$ in machine type $j$. For systems with multiple instances of the same machine type, the corresponding columns of those machines in the EET matrix are identical. The EET matrix, as a whole, represents the expected performance of the entire system in terms of the execution times of the tasks. As such, the matrix can be used as a guide to understand the throughput that the system can potentially achieve. We propose a mathematical model to measure the task heterogeneity and machine heterogeneity levels separately as two dimensions describing the system heterogeneity level. Then, we devise a mathematical model to determine the overall system heterogeneity based on these dimensions. For this purpose, we examine various measures of central tendency, namely arithmetic, geometric, and harmonic means, and propose a single measure, called \SEEL~ (\emph{\SEE}), which, for a given EET matrix and workload, describes the expected execution time of a hypothetical homogeneous system whose throughput is similar to the heterogeneous system represented by the EET matrix. Subsequently, we employ the \SEE measure to determine the throughput of the heterogeneous computing system. We evaluated the measure \SEE for various heterogeneous systems and workloads and showed that it can accurately describe the impact of the level of heterogeneity on the desired throughput. 

In summary, the specific contributions of this paper are as follows:
\begin{itemize}
     \item Providing a measure to quantify the system heterogeneity such that it can be used to determine the throughput of the system for a given workload.
     
     \item Proposing a systematic approach to analyzing heterogeneity of a computing system by means of decoupling the heterogeneity into machine and task heterogeneity dimensions and characterizing each dimension separately by making use of a custom-defined speedup metric.
        
    \item Proving the appropriateness of the arithmetic mean and harmonic mean to measure the central tendency of speedup values due to machine heterogeneity and task heterogeneity, respectively.     
    
    \item Validating how system performance (makespan and throughput) can be derived as a function of the proposed heterogeneity measure: \url{https://github.com/hpcclab/heterogeneity_measure}. 
\end{itemize}

The rest of this paper is organized as follows. In Section~\ref{method}, we describe the mathematical model to quantify the heterogeneity of the system based on the overall speedup that a system obtains by employing heterogeneity. Section \ref{system-design} provides an overview of the implementation in the real world that we have developed to validate the accuracy of the \SEE measure. Section~\ref{experiments}, examines the proposed heterogeneity measure. In Section~\ref{related_works}, we review the previous studies and position our work with respect to them. Conclusions and future works are discussed in Section~\ref{conclusion}.

\section{Analyzing Performance Characteristics of Heterogeneous Systems}\label{method}
\subsection{overview}

System heterogeneity is the result of the synergy between machine heterogeneity and task heterogeneity dimensions. Accordingly, our approach to quantifying system heterogeneity is to measure the heterogeneity of each dimension individually and then combine them to quantify the overall system heterogeneity. To this end, we base our analysis on the notion of EET matrix that is representative of the system performance. Considering that each row of EET represents a task type and each column represents a machine type, the row-wise variations illustrate the machine heterogeneity, and similarly, the column-wise variations express the task heterogeneity. Note that, in the event that some machines in the system are homogeneous, their corresponding columns in the EET matrix are repeated.

We exploit the notion of speed-up to characterize the impact of heterogeneity on the system execution time behavior. The gained ``speed-up due to machine heterogeneity'' can be described by the row-wise analysis of the EET matrix. We represent this speed-up by a row vector, denoted $\Vec{\alpha}^{(i)}$, which has the same dimension as the i\textsuperscript{th} row of the EET matrix. Each entry $\alpha_j^{(i)}$ denotes the speed-up that the system can achieve when the machine type $j$ executes the task type $i$ instead of running it on the slowest machine type. Consequently, given the EET matrix of size $m\times n$, the value of $\alpha_j^{(i)}$ is determined based on Equation~\ref{eq:sm}. 
\begin{equation} \label{eq:sm}
    \alpha_j^{(i)} = \frac{\max\limits_{j=1}^n e_{ij}}{e_{ij}}
\end{equation}

In the above equation, $\alpha_j^{(i)} = 1$, if and only if machine type $j$ is the same as the slowest machine for task type $i$. 

Similarly, we define ``speed-up due to task heterogeneity'', denoted $\Vec{\beta}_{(j)}$, as a column vector with the same dimension of j\textsuperscript{th} column of the EET matrix. The entry $\beta^i_{(j)}$ indicates the speed-up achieved by executing a task of type $i$ on the machine type $j$, as opposed to executing the slowest task type on that machine. Formally, $\beta^i_{(j)}$ is calculated on the basis of Equation~\ref{eq:st}. 

\begin{equation} \label{eq:st}
    \beta^i_{(j)} = \frac{\max\limits_{i=1}^m e_{ij}}{e_{ij}}
\end{equation}

The value of $\beta^i_{(j)}$ is one if and only if the task type $i$ is the slowest task type on the machine type $j$. 

Given the speed-up vectors due to machine heterogeneity and task heterogeneity, we need to represent each speed-up vector in the form of a scalar value. This scalar value for $\Vec{\alpha}^i$ ($\Vec{\beta}_{(j)}$) accurately describes how much all machine types (task types) together can gain speed-up due to heterogeneity in machines (tasks). Based on the mean-field method~\cite{allmeier2022mean, ying2016approximation}, the interaction of variables in a complex stochastic system can be replaced by the average interactions between these variables. As such, we can use the mean to represent the speed-up behavior of all $(task, machine)$ pairs in both $\Vec{\alpha}^{(i)}$ and $\Vec{\beta}_{(j)}$. For this purpose, we employ statistical measures of central tendency (arithmetic, geometric, and harmonic mean) to accurately represent $\Vec{\alpha}^{(i)}$ and $\Vec{\beta}_{(j)}$. However, the challenge is that there is no consensus on the appropriateness of these measures to capture the central tendency of a specific use case \cite{lilja2005measuring, john2004more} and it must be investigated case by case. An appropriate mean speed-up value derived from the EET matrix must precisely depict the ``real speed-up'', a.k.a. \emph{true speed-up} (denoted $\Gamma$), that the heterogeneous system can achieve for a given workload. We exploit the notion of makespan (\ie the total time of executing the workload) to calculate the true speed-up. According to Equation~\ref{eq:gamma}, the true speed-up is determined based on the makespan of executing the workload on the heterogeneous system with respect to its slowest ``counterpart homogeneous system'', as the base system (a.k.a. baseline). 

\begin{equation} \label{eq:gamma}
    \Gamma = \frac{\text{homogeneous system makespan}}{\text{ heterogeneous system makespan}}
\end{equation}

Note that the counterpart homogeneous system is represented by an EET matrix whose entries are all equal to the maximum value of the EET matrix of the heterogeneous system. We use $\Gamma_M$ and $\Gamma_T$ to represent the true speed-up with respect to machine heterogeneity and task heterogeneity, respectively. 

Once we know the true speed-up for a given workload, we can compare it against the calculated mean speed-up to know how accurate the calculated one is. We provide three lemmas to introduce appropriate central tendency measures that can represent speed-up due to machine heterogeneity and task heterogeneity. 

In the first step, to quantify the heterogeneity of the system, for each type of machine, we utilize the EET matrix to generate speed-up vectors due to the heterogeneity of the task using Equation~\ref{eq:st}. These speed-up column vectors construct a speed-up matrix due to task heterogeneity, denoted by$S^T$. In the second step, we represent each speed-up vector due to task heterogeneity by its mean value. Note that, in this step, we fuse all task types into a single hypothetical equivalent task type that can describe the execution time behavior of all task types. Taking into account that we consider the execution time of the slowest task type as the baseline, in the third step, for each machine type, we utilize the mean speed-up values due to task heterogeneity and the baseline execution time to determine the execution time behavior of the hypothetical equivalent task type. We repeat this step for all machine types to construct a row vector depicting the execution times of the hypothetical equivalent task type on the machine types. In step four, based on the row vector generated in the last step (as the EET of the representative task type), we generate the speed-up vector due to machine heterogeneity based on Equation~\ref{eq:sm}. Eventually, we reduce the speed-up vector due to machine heterogeneity for the hypothetical equivalent task type into a single scalar value. We use this mean speed-up value and the execution time of the slowest machine for the hypothetical task type to determine a single-value execution time that represents the execution time behavior of the entire system. We use this representative execution time of the hypothetical equivalent task type on the hypothetical equivalent machine type to define a heterogeneity measure.

Provided that heterogeneity affects system performance by means of execution time (reflected in the EET matrix), we define a measure based on execution time behavior, called \emph{\SEEL \xspace (\SEE)}, to estimate the impact of the system heterogeneity on throughput. In fact, \SEE represents how fast the heterogeneous machines in the system process heterogeneous tasks.
Our hypothesis is that systems with the same \SEE measure expose comparable execution time behavior, thus, for a given workload, they exhibit similar makespan and throughput. Therefore, the \SEE score is capable of properly characterizing the heterogeneity of the system with respect to the throughput it can offer without running the workload against the heterogeneous system. In the rest of this section, we elaborate on characterizing machine heterogeneity and task heterogeneity dimensions. Then, we discuss how to fuse these dimensions to characterize the system heterogeneity.

\subsection{Characterizing Machine Heterogeneity} \label{sec:charac_machine}
For the task type $i$, we process row $i$ of the EET matrix with respect to the slowest machine for that task type to form a \emph{row speed-up vector}, denoted $\Vec{\alpha}^{((i)}$, representing the speed-up values due to machine heterogeneity. For the task type $i$, we use the central tendency measure (mean) of the vector components of the row speed-up, denoted $\widebar{\alpha}^{(i)}$, to aggregate the speed-up behavior of all types of machines. 

According to \cite{smith1988characterizing}, in the circumstances where performance is expressed as a rate (\eg flops), generally the harmonic mean can accurately express the central tendency. Also, the central tendency can usually be represented by the arithmetic mean when the performance is of a time nature (\eg makespan or total execution time of a benchmark). Lastly, they suggest avoiding using geometric mean when the performance is of the time or rate nature. In another study~\cite{john2004more}, the speed-up is considered as the performance metric to compare an improved system against a baseline one. To this end, they used a benchmark suite to evaluate the performance of each system. Then, based on the makespan of each individual benchmark in the benchmark suite, the speed-up for the enhanced system is calculated. Next, the authors discussed different measures of central tendency (\ie arithmetic, harmonic, and geometric mean) to summarize the speed-up results of the benchmarks into a single number such that it appropriately describes the overall speed-up for the entire benchmark suite. To validate the suitability of the mean speed-up measure, they compared it with the speed-up achieved by considering the makespan of the entire benchmark suite on both systems.

In this research, we also follow the same approach as \cite{john2004more} to validate the accuracy of the central tendency measure of the speed-up matrix. In particular, we use the notion of true speed-up ($\Gamma$) to verify that the central tendency measure accurately represents the speed-up vector of the row. In addition, the intensity of task arrival to the system impacts machines' utilization, which, in turn, affects the true speed-up.
In Lemma~\ref{lemma_sm_avg}, we study an extreme case where the task arrival rate is large enough such that all machines in the system have a task for execution at all times. We show that for such a system, the arithmetic mean can appropriately summarize $\Vec{\alpha}^{((i)}$ and represent the mean speed-up due to machine heterogeneity. For the other side of the spectrum, where the task arrival rate is low such that only one machine in the system is executing a task at a time, we present Lemma~\ref{lemma_sm_h} to prove that the harmonic mean should be used to accurately summarize $\Vec{\alpha}^{((i)}$ and represent the mean speed-up due to machine heterogeneity. In these lemmas, we assume that there is a single unbounded FCFS queue of tasks that are all available for execution (like the bag-of-tasks \cite{oxley2014makespan}). Whenever a machine becomes free, it takes the next task from the queue to execute it.

\begin{lemma}\label{lemma_sm_avg}
    Let $EET=[e_{ij}]$ ($1 \le j \le n$) denote the EET vector of a heterogeneous computing system consisting of a set of machine types, $M = \{M_1, M_2, ..., M_n\}$, and a workload with $c>n$ tasks of type $T_i$ that are all available for execution (such as the bag of tasks \cite{oxley2014makespan}). Tasks are queued upon arrival in a single unbounded FCFS queue. Whenever a machine becomes free, it takes a task from the queue and executes it. Then, the true speed-up is calculated as follows:

   \begin{equation} \label{eq:lem1_sm_avg1}
       \Gamma_M = \widebar{\alpha}^{(i)(A)}                
   \end{equation}

   where 
   
   \begin{equation} \label{eq:alpha_A}         
      \widebar{\alpha}^{(i)(A)} = \frac{1}{n}\cdotp\sum\limits_{j=1}^{n} \alpha^{(i)}_j
   \end{equation}

\end{lemma}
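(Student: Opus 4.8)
The plan is to evaluate both makespans appearing in Equation~\ref{eq:gamma} explicitly under the stated saturation hypothesis and show that their ratio collapses to the arithmetic mean of Equation~\ref{eq:alpha_A}. Throughout, write $e_{\max}=\max_{j} e_{ij}$ for the execution time of task type $i$ on its slowest machine, so that $\alpha^{(i)}_{j}=e_{\max}/e_{ij}$ by Equation~\ref{eq:sm}.

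First I would model the heterogeneous system under the assumption that every machine holds a task at all times. Since machine type $j$ spends exactly $e_{ij}$ time units per task, it completes tasks at the constant rate $1/e_{ij}$, so the aggregate completion rate of the $n$ machines is $\sum_{j=1}^{n} 1/e_{ij}$. Because the greedy FCFS discipline is work-conserving and there are $c>n$ tasks available in the bag, the number of completed tasks by time $t$ is $t\sum_{j} 1/e_{ij}$; setting this equal to $c$ shows the heterogeneous makespan is $c\left(\sum_{j=1}^{n} 1/e_{ij}\right)^{-1}$.

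Next I would apply the identical argument to the counterpart homogeneous baseline, which by definition consists of $n$ machines each with execution time $e_{\max}$. Its aggregate rate is $n/e_{\max}$ and its makespan is therefore $c\,e_{\max}/n$. Substituting both quantities into Equation~\ref{eq:gamma} yields
\[
\Gamma_M=\frac{c\,e_{\max}/n}{c\left(\sum_{j=1}^{n} 1/e_{ij}\right)^{-1}}=\frac{e_{\max}}{n}\sum_{j=1}^{n}\frac{1}{e_{ij}}=\frac{1}{n}\sum_{j=1}^{n}\frac{e_{\max}}{e_{ij}}=\frac{1}{n}\sum_{j=1}^{n}\alpha^{(i)}_{j},
\]
which is exactly $\widebar{\alpha}^{(i)(A)}$, proving the lemma.

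The step I expect to be the main obstacle is justifying the throughput-based makespan as an exact equality rather than an approximation. For a finite bag of $c$ tasks the greedy schedule has a tail in which the queue empties while some machines are still finishing their last task, so the true makespan differs from $c\left(\sum_{j} 1/e_{ij}\right)^{-1}$ by a boundary term of order $e_{\max}$. I would resolve this either by appealing directly to the lemma's ``busy at all times'' hypothesis, which eliminates the tail by assumption, or, more carefully, by letting $c\to\infty$ so that the linearly growing bulk term dominates the bounded boundary term; since an identical $O(e_{\max})$ correction sits in both numerator and denominator of Equation~\ref{eq:gamma}, it cancels and the ratio converges to $\widebar{\alpha}^{(i)(A)}$. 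It is worth noting that the derivation uses only work-conservation, so the result is insensitive to the exact FCFS assignment order.
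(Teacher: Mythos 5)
Your proposal is correct and follows essentially the same route as the paper: both compute the homogeneous makespan as $c\,e_{\max}/n$ and the heterogeneous makespan as $c\bigl(\sum_{j=1}^{n} 1/e_{ij}\bigr)^{-1}$ (the paper parametrizes this by the number of tasks the slowest machine completes, which is algebraically identical to your aggregate-rate formulation), and both take their ratio to obtain the arithmetic mean of the $\alpha^{(i)}_j$. Your handling of the finite-$c$ boundary term likewise mirrors the paper, which uses a ceiling on the task count and bounds the resulting relative error by $\sum_j \alpha^{(i)}_j / c$, vanishing as $c \to \infty$.
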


\begin{proof}
We assume that machine type $k$ is the slowest for task type $T_i$, that is, we have $\max\limits_{j=1}^n e_{ij} = e_{ik}$. Then, the baseline system consists of $n$ machines with the expected execution time of $e_{ik}$. For a single FCFS queue, $c$ tasks are equally distributed between homogeneous machines $n$. Hence, each machine has to handle $\frac{c}{n}$ tasks, where the expected execution time of each task is $e_{ik}$. This means that the total time to complete those $c$ tasks in the homogeneous system is $\lceil \frac{c}{n} \rceil \times e_{ik}$. We know $0 \le \lceil \frac{c}{n} \rceil - \frac{c}{n} < 1 $. If we replace $\lceil \frac{c}{n} \rceil$ with $\frac{c}{n}$, the error in calculating the total time is at most $({\lceil \frac{c}{n} \rceil - \frac{c}{n}})/{\lceil \frac{c}{n} \rceil}$, which is negligible for a large number of tasks ($c \gg n$). Thus, for simplicity, we assume that the makepan to complete $c$ tasks in the homogeneous counterpart system is $\frac{c}{n} \times e_{ik}$.

However, in the heterogeneous system, the proportions of the tasks handled by each machine type are not equal, because faster machines can execute more tasks. Specifically, machine type $j$ that has $e_{ij} \le e_{ik}$ executes ${e_{ik}}/{e_{ij}}$ tasks, while machine type $k$ executes only one task. As a result, the proportion of the total number of tasks executed by the slowest machine type to the total number of tasks, denoted by $p_k$, is calculated as follows:

\begin{equation}\label{eq:pk}
    p_k = \Biggl \lceil \frac{c}{\sum\limits_{j=1}^n \frac{e_{ik}}{e_{ij}}} \Biggr \rceil  ,\quad c \ge n
\end{equation}

Then, the makespan of executing $c$ tasks on the heterogeneous system is $p_k \times e_{ik}$. In Equation~\ref{eq:pk}, we know that ${e_{ik}}/{e_{ij}}=\alpha^{(i)}_j$, therefore, the speed-up of the heterogeneous system is calculated as follows:

\begin{equation} \label{eq:speed-up}
    \Gamma_M = \frac{1}{n}\cdotp \frac{c}{\Bigl \lceil \frac{c}{\sum\limits_{j=1}^n \alpha^{(i)}_j} \Bigr \rceil}   
\end{equation}

Assuming that $\frac{c}{\sum\limits_{j=1}^n \alpha^{(i)}_j }\in \mathbb{Z}$, we have $\Biggl \lceil \frac{c}{\sum\limits_{j=1}^n \frac{e_{ik}}{e_{ij}}} \Biggr \rceil = \frac{c}{\sum\limits_{j=1}^n \frac{e_{ik}}{e_{ij}}}$ and Equation~\ref{eq:speed-up} can be represented as follows:

\begin{equation} \label{eq:interval_speed-up}
   \Gamma_M = \frac{1}{n}\cdotp \frac{c}{\frac{c}{\sum\limits_{j=1}^{n} \alpha^{(i)}_j} } = \frac{1}{n}\cdotp \sum\limits_{j=1}^{n} {\alpha^{(i)}_j} =  \widebar{\alpha}^{(i)(A)}
\end{equation}

Note that $\widebar{\alpha}^{(i)(A)}$ is the arithmetic mean of $\Vec{\alpha}^{(i)}$ components. However, if  $\frac{c}{\sum\limits_{j=1}^n \alpha^{(i)}_j }\notin \mathbb{Z}$, the true speed-up is not exactly the arithmetic mean of the row speed-up vector components ($\widebar{\alpha}^{(i)(A)} \neq \Gamma_M$), and the corresponding error, denoted $\epsilon^{(i)}$, is determined as follows:

\begin{equation}
    \epsilon^{(i)} = \frac{\widebar{\alpha}^{(i)(A)}-\Gamma_M}{\Gamma_M} < \frac{\sum\limits_{j=1}^n \alpha^{(i)}_j}{c}
\end{equation}

It should also be noted that for a large number of tasks ($c\gg \sum\limits_{j=1}^n \alpha^{(i)}_j$), we have $\epsilon^{(i)} \approx 0$
\vspace{-5mm}

\end{proof}

In Lemma~\ref{lemma_sm_avg}, we made the assumption that tasks are queued into a single unbounded FCFS queue. Whenever a machine becomes available, it selects a task from the queue and processes it. In support of this scheduling approach, we introduce Lemma~\ref{lemma:FCFS_NQ}, demonstrating that it yields the minimum makespan for bag-of-tasks on heterogeneous computing systems.

\begin{lemma} \label{lemma:FCFS_NQ}
Let $EET=[e_{ij}]$ ($1 \le j \le n$) denote the EET vector of a heterogeneous computing system consisting of a set of machine types, $M = \{M_1, M_2, ..., M_n\}$, and a workload with $c$ tasks of type $T_i$ that are all available for execution (\ie bag of tasks). Then, the minimum makespan (total time to complete the workload) is obtained by using a Round-Robin load balancer across available machines. That is, whenever a machine becomes free, it takes a task from the queue and executes it.
\end{lemma}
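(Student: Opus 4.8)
The plan is to reduce the scheduling question to a purely combinatorial one about how the $c$ identical tasks are partitioned among the machines, and then to show that the work-conserving ``grab-when-free'' policy realizes (in the large-workload regime) the partition that minimizes the makespan. Fix the task type $T_i$ and write $t_j = e_{ij}$ for the per-task execution time on $M_j$. For \emph{any} schedule, let $k_j$ be the number of tasks that $M_j$ ends up processing, so that $\sum_{j=1}^n k_j = c$. Since the $c$ tasks are identical, deterministic, released at time $0$, and (absent migration) each runs entirely on one machine, $M_j$ is busy for a total of $k_j t_j$ and so cannot complete its last task before $k_j t_j$; hence the makespan of that schedule is at least $\max_{j} k_j t_j$. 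This lower bound depends only on the induced partition $(k_1,\dots,k_n)$, and it is attained by the schedule that runs the $k_j$ tasks back-to-back on each $M_j$ starting at time $0$. Consequently the minimum achievable makespan over all schedules equals
\begin{equation}\label{eq:Cstar}
C^{*} = \min_{\substack{(k_1,\dots,k_n)\in\mathbb{Z}_{\ge 0}^n \\ \sum_j k_j = c}} \; \max_{j} k_j t_j ,
\end{equation}
and the task reduces to showing that the single-queue policy induces a partition attaining this minimum.

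Next I would analyze the partition produced by the policy. Under ``whenever a machine becomes free it takes the next task,'' no machine ever idles while the queue is nonempty, so every machine runs continuously from time $0$ until the instant $\tau^{*}$ at which the queue empties, and thereafter merely finishes the single task in progress. Hence every machine's completion time lies in the window $[\tau^{*}, \tau^{*} + \max_j t_j)$, which forces the greedy partition $(k_1^{G},\dots,k_n^{G})$ to be \emph{balanced}: the products $k_j^{G} t_j$ differ from one another by less than $\max_j t_j$, so $k_j^{G}$ is proportional to $1/t_j$ up to rounding. This is exactly the water-filling allocation that minimizes \eqref{eq:Cstar}; indeed $k_j^{G} t_j \approx c/\sum_{\ell} (1/t_\ell)$, and the proportions $k_j^{G}\propto 1/t_j = \alpha_j^{(i)}/e_{ik}$ coincide with the per-machine task shares already derived in the proof of Lemma~\ref{lemma_sm_avg}. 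When the balanced allocation is integral, the greedy partition equals the minimizer of \eqref{eq:Cstar} and the policy is exactly optimal.

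The main obstacle I anticipate is the same integrality (ceiling) effect that appeared in Lemma~\ref{lemma_sm_avg}: for small $c$ the literal ``grab-when-free'' rule can be forced, at time $0$, to commit a task to a slow machine that a fully optimal scheduler would have left idle, so $\max_j k_j^{G} t_j$ can exceed $C^{*}$ by up to one task length $\max_j t_j$. I would control this gap exactly as the earlier proof controls $\epsilon^{(i)}$: the relative overshoot is $O\!\big(\max_j t_j / C^{*}\big)$, which vanishes in the bag-of-tasks regime $c \gg n$ that the lemma targets, so the greedy partition meets $C^{*}$ asymptotically (and exactly whenever $c/\sum_j(1/t_j)$ yields an integral balanced allocation). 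The remaining routine points --- that preemption and task migration cannot beat a non-preemptive back-to-back schedule for identical deterministic tasks, and that FCFS ordering is irrelevant because the tasks are interchangeable --- I would dispatch with short exchange arguments, leaving the balancing argument above as the crux.
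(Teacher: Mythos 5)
Your proposal is correct in substance and rests on the same two pillars as the paper's own proof: (i) the grab-when-free policy keeps every machine busy until the queue drains, so the induced per-machine task counts are proportional to $1/e_{ij}$ and the per-machine busy times are (nearly) equalized, and (ii) no allocation can beat a balanced one, because any allocation whose makespan is below the balanced finish time would have every machine complete strictly fewer tasks, hence fewer than $c$ tasks in total --- this counting contradiction is exactly the paper's argument around Equations~\ref{eq:total_c} and~\ref{eq:njprime}, and it is also the unstated justification for your assertion that the water-filling partition minimizes your min-max objective, so you should spell it out rather than cite it as ``exactly'' the optimal allocation. Where you genuinely differ is rigor about integrality: the paper works entirely with fractional task counts $n_j = c/(e_{ij}\sum_{\ell} 1/e_{i\ell})$, asserts that round-robin attains the common finish time $\tau^*$ exactly, and never acknowledges that for integral task counts the policy is optimal only up to one task length; you make this gap explicit, bound the relative overshoot by $O(\max_j e_{ij}/C^{*})$, and correctly separate the exactly-optimal integral case from the asymptotic $c \gg n$ case. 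That is an improvement, since the lemma as literally stated (``the minimum makespan is obtained'') holds only in the fluid limit that the paper silently adopts. One caution: your closing remark that preemption and migration ``cannot beat a non-preemptive back-to-back schedule'' and can be dispatched by short exchange arguments is false in general --- with a McNaughton-style wrap-around schedule, three unit tasks on two identical machines finish by time $1.5$ preemptively versus $2$ non-preemptively --- so you should either restrict the comparison class to whole-task assignments (as the paper implicitly does: its adversary is just another vector of per-machine task counts) or note that the preemptive advantage is likewise below one task length and vanishes in the same $c \gg n$ regime.
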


\begin{proof}

Based on the Round-Robin load balancer for available machines, whenever a machine becomes free, it takes a task from the queue and executes it. Let $e_k = \max\limits_{j=1}^n e_{ij}$, representing the slowest machine type for the task $T_i$. Based on the proof in Lemma~\ref{lemma_sm_avg}, the number of tasks completed on each machine is as follows:

\begin{equation}
    n_j = \frac{e_k}{e_{ij}} \cdot \frac{c}{\sum\limits_{j=1}^n \frac{e_k}{e_{ij}}} =  \frac{c}{e_{ij} \sum\limits_{j=1}^n \frac{1}{e_{ij}}}
\end{equation}

These fractions have the following characteristics:

\begin{equation} \label{eq:total_c}
    \sum\limits_{j=1}^n n_j = \sum\limits_{j=1}^n \frac{c}{e_{ij} \sum\limits_{j=1}^n \frac{1}{e_{ij}}} = c
\end{equation}

Also, based on Lemma~\ref{lemma_sm_avg}, the makespan, denoted $\tau^*$, is determined as follows:
\begin{equation}
   \forall j  \ \ \ n_j \cdot e_{ij} = \frac{c}{\sum\limits_{j=1}^n \frac{e_k}{e_{ij}}} \cdot e_k = \tau^* 
\end{equation}

We assume that there are other fractions of tasks completed on machines, denoted $n_j^{\prime}$ ($1\le j \le n$), such that the resultant makespan, denoted $\tau^{\prime}$, is less than $\tau^*$. Thus, for each machine, $n_j^{\prime}$ must be less than $n_j$. If there exists a machine with $n_j^{\prime} > n_j$, then the corresponding makespan of the tasks completed on that machine will be $\tau^{\prime} = n_j^{\prime} \cdot e_{ij} > n_j \cdot  e_{ij} = \tau^*$, which is in contradiction with the primary assumption $\tau^{\prime} < \tau^*$. Thus, we have the following:

\begin{equation} \label{eq:njprime}
      n_j^{\prime} < n_j  \ \ \ 1\le j \le n
\end{equation}

Based on Equation~\ref{eq:njprime}, for the total number of tasks completed on the machines, we have:

\begin{equation}
    \sum\limits_{j=1}^n n_j^\prime < \sum\limits_{j=1}^n n_j
\end{equation}

Based on Equation~\ref{eq:total_c}, $\sum\limits_{j=1}^n n_j = c$. As a result,

\begin{equation}
    \sum\limits_{j=1}^n n_j^\prime < c
\end{equation}

Thus, we cannot obtain a makespan less than $\tau^*$ with the same number of tasks. This proves that assigning tasks that are all available for execution on available machines in a Round-Robin manner results in a minimum makespan.
    
\end{proof}

In Lemma~\ref{lemma_sm_avg}, we studied the case where tasks are available for execution a priori so that the impact of task arrival is abstracted. On the other side of the spectrum, we can consider a system with sparse task arrival (\ie large inter-arrival times between tasks) where some machines may become idle while others are busy. In an extreme case with a very low arrival rate, the entire system can potentially become idle. Under certain (threshold) arrival rates, in between the two extremes, only one machine in the system is busy and the rest are idle during the workload processing. Any arrival rate lower than the threshold leads to system idling, whereas any higher arrival rates lead to more than one busy machine at a time. 

\begin{lemma}\label{lemma_sm_h}
    The assumptions are the same as those in Lemma~\ref{lemma_sm_avg}, except that the arrival rate is reduced so that the number of machines busy is exactly one throughout the processing of the workload. In this case, the mean speed-up of using a heterogeneous system for task type $T_i$ is the ``harmonic mean" of the $\Vec{\alpha}$ components, denoted by $\widebar{\alpha}^{(i)(H)}$, and it is calculated as follows:
   
   \begin{equation} 
       \Gamma_M = \widebar{\alpha}^{(i)(H)} = \frac{n}{\sum\limits_{j=1}^m \frac{1}{\alpha^{(i)}_j}}    
   \end{equation}    
\end{lemma}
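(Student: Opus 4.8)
The plan is to mirror the proof of Lemma~\ref{lemma_sm_avg}, but now in the regime where no two machines ever run concurrently, so that the workload is processed strictly serially. As before, I would let machine type $k$ be the slowest for $T_i$, i.e. $\max_{j=1}^n e_{ij}=e_{ik}$, so that $\alpha_j^{(i)}=e_{ik}/e_{ij}$ and equivalently $e_{ij}/e_{ik}=1/\alpha_j^{(i)}$. The structure is then to compute the makespan of the heterogeneous system and of its slowest counterpart homogeneous system (all entries equal to $e_{ik}$) separately, and to form their ratio according to Equation~\ref{eq:gamma}.

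First I would argue that, at the threshold arrival rate described just before the lemma, the single FCFS queue feeds exactly one free machine at a time, so the assignment effectively cycles through the machine types in Round-Robin order. Hence each of the $n$ machine types executes an equal share of the workload, $c/n$ tasks (up to a ceiling, exactly as in Lemma~\ref{lemma_sm_avg}). Because at most one machine is ever busy, the makespan is just the sum of the execution times of the tasks that are actually run, giving a heterogeneous makespan of $\sum_{j=1}^n \frac{c}{n}\,e_{ij}=\frac{c}{n}\sum_{j=1}^n e_{ij}$. For the homogeneous counterpart every task costs $e_{ik}$ and, since the machines are identical, the serial makespan is unambiguously $c\,e_{ik}$ regardless of how the $c$ tasks are split.

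I would then substitute these two makespans into Equation~\ref{eq:gamma}:
\[
\Gamma_M=\frac{c\,e_{ik}}{\frac{c}{n}\sum_{j=1}^n e_{ij}}
=\frac{n\,e_{ik}}{\sum_{j=1}^n e_{ij}}
=\frac{n}{\sum_{j=1}^n \frac{e_{ij}}{e_{ik}}}
=\frac{n}{\sum_{j=1}^n \frac{1}{\alpha_j^{(i)}}},
\]
which is exactly $\widebar{\alpha}^{(i)(H)}$, the harmonic mean of the components of $\Vec{\alpha}^{(i)}$. As in Lemma~\ref{lemma_sm_avg}, replacing $\lceil c/n\rceil$ by $c/n$ introduces only an error that vanishes as $c\to\infty$, which I would note but not belabor.

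The main obstacle I expect is not the algebra — which is short — but rigorously justifying the equal split of the workload across machine types. In the high-load regime of Lemma~\ref{lemma_sm_avg} the faster machines legitimately grab proportionally more tasks, and the harmonic-mean result relies crucially on the opposite happening here: each machine type must handle the same number of tasks so that all of them receive equal weight in the mean. I would therefore tie the equal-share claim carefully to the Round-Robin behaviour of the single FCFS queue at the threshold rate (one idle-to-busy transition per arrival, cycling through the machine types), and make explicit that the makespan in this regime is the serial sum of per-task execution times rather than the parallel, throughput-limited quantity used in the earlier lemma.
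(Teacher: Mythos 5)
Your proposal is correct and follows essentially the same route as the paper's own proof: identify the slowest machine type $k$, assign each machine an equal share of $c/n$ tasks, observe that serial execution makes the heterogeneous makespan $\frac{c}{n}\sum_{j} e_{ij}$ and the homogeneous one $c\,e_{ik}$, and take the ratio to recover the harmonic mean of $\Vec{\alpha}^{(i)}$. Your added care in justifying the equal per-machine split via the Round-Robin behaviour of the FCFS queue is a point the paper leaves implicit, but it does not change the argument.
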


\begin{proof}
Let $k$ be the slowest machine type in the heterogeneous system that executes task type $T_i$ with the expected execution time of $e_{ik}$. We consider a baseline homogeneous system counterpart with $n$ machines of type $k$ and with the FCFS scheduler. Recall that we assume the task arrival is such that only one machine is busy at any given time. Therefore, the makespan is the sum of execution times for all tasks, and for $c$ tasks the makespan is $c\times e_{ik}$ and each machine executes $\frac{c}{n}$ tasks. Also, the makespan for the heterogeneous system is $\frac{c}{n}\cdotp\sum\limits_{j=1}^m e_{ij}$. Taking these into account, the speed-up of using a heterogeneous computing system is calculated as follows:

\begin{equation} 
       \Gamma_M = \frac{c\cdotp e_{i,k}}{\frac{c}{n}\cdotp\sum\limits_{j=1}^m e_{ij}} = \frac{n}{\sum\limits_{j=1}^m \frac{1}{\alpha^{(i)}_j}}= \widebar{\alpha}^{(i)(H)}       
   \end{equation} 
As we can see, the speed-up of using a machine heterogeneous system that has only one machine busy at any given time aligns with the harmonic mean formula. 
    
\end{proof}

\subsection{Characterizing Task Heterogeneity}
In Section~\ref{sec:charac_machine}, we consider the row speed-up vector to characterize machine heterogeneity for a given task type. Likewise, to characterize the heterogeneity of tasks, for machine type $j$, we define \emph{column speed-up vector}, denoted $\Vec{\beta}_{(j)}$. Then, we summarize the column speed-up vector into a representative mean value, denoted $\widebar{\beta}^H_{(j)}$. 

In Lemmas~\ref{lemma_sm_avg} and~\ref{lemma_sm_h}, based on the true speed-up, we proved that the arithmetic mean (for high task arrival rate) and the harmonic mean (for low task arrival rate) are representative measures for the row speed-up vector due to machine heterogeneity. Next, in Lemma~\ref{lemma_st_avg}, we shift our attention to the task heterogeneity dimension and prove that the harmonic mean is an accurate measure of the central tendency of $\Vec{\beta}_{(j)}$.

\begin{lemma}\label{lemma_st_avg}
    Let $EET=[e_{ij}]$ ($1 \le i \le m$) denote the expected execution time (EET) vector of a set of task types, $T = \{T_1, T_2, ..., T_m\}$ in the machine type $M_j$. A workload trace of $c$ tasks ($c>m$) of type $T_i \in T$ arrive at the system. Tasks are queued upon arrival into a single unbounded FCFS queue and executed by machine $M_j$. Then, the true speed-up is determined as follows:
    

    \begin{equation} \label{eq:lem2_st_avg}
      \Gamma_T = \frac{1}{\sum\limits_{i=1}^{m} \frac{\omega_i}{\beta^i_{(j)}}} = \widebar{\beta}^H_{(j)}       
   \end{equation}
\end{lemma}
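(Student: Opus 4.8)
The plan is to compute both makespans directly, since on a single machine the total execution time is simply additive and no task-to-machine distribution (and hence no ceiling-induced rounding, unlike Lemma~\ref{lemma_sm_avg}) is involved. First I would fix the slowest task type: let $k$ be the index with $e_{kj} = \max\limits_{i=1}^m e_{ij}$, so that $\beta^i_{(j)} = e_{kj}/e_{ij}$ for every $i$. This $e_{kj}$ is the per-task execution time of the counterpart homogeneous system on machine $M_j$, whose EET entries are all set to the column maximum.

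Next I would write each makespan in terms of the workload composition. Let $c_i$ be the number of tasks of type $T_i$ in the trace, so that $\sum_{i=1}^m c_i = c$, and let $\omega_i = c_i/c$ denote the fraction of the workload that is of type $T_i$. Because $M_j$ is a single machine draining one FCFS queue, it processes tasks back-to-back with no idle time, so the makespan equals exactly the sum of the per-task execution times. For the heterogeneous system this gives $\sum_{i=1}^m c_i e_{ij} = c\sum_{i=1}^m \omega_i e_{ij}$, while for the homogeneous counterpart every task costs $e_{kj}$ and the makespan is $c\,e_{kj}$.

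Then I would form the ratio in Equation~\ref{eq:gamma} and substitute $e_{ij} = e_{kj}/\beta^i_{(j)}$: the factor $c$ cancels, $e_{kj}$ factors out of the denominator and cancels against the numerator, and what remains is $1/\sum_{i=1}^m (\omega_i/\beta^i_{(j)})$, i.e.\ the weighted harmonic mean $\widebar{\beta}^H_{(j)}$ asserted in Equation~\ref{eq:lem2_st_avg}. Structurally this parallels the single-busy-machine argument of Lemma~\ref{lemma_sm_h}, which is why a harmonic mean reappears; the new ingredient is only that different task types occur with different multiplicities.

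There is no genuine analytical obstacle here: the derivation is short and, in contrast with Lemma~\ref{lemma_sm_avg}, it is \emph{exact} rather than asymptotic, because a single machine introduces no integer load-splitting. The one point that warrants care is correct bookkeeping of the workload weights: the $\omega_i$ must be the empirical proportions of each task type in the trace, and it is precisely these proportions — rather than a uniform average — that promote the unweighted harmonic mean of Lemma~\ref{lemma_sm_h} to the weighted harmonic mean claimed here. I would also note explicitly why the arrival pattern is irrelevant: with a single machine serving a single queue, reordering or respacing arrivals cannot alter the total busy time, so the makespan (and therefore $\Gamma_T$) depends only on the multiset of task types, which is exactly what the weights $\omega_i$ encode.
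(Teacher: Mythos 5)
Your proposal is correct and follows essentially the same route as the paper's own proof: identify the slowest task type $k$ with $e_{kj} = \max_i e_{ij}$, write the homogeneous makespan as $c\,e_{kj}$ and the heterogeneous one as $c\sum_i \omega_i e_{ij}$, then form the ratio and substitute $\beta^i_{(j)} = e_{kj}/e_{ij}$ to obtain the weighted harmonic mean. Your added remarks — that the result is exact (no ceiling terms) and that the arrival pattern is irrelevant on a single machine — are correct observations the paper leaves implicit, but they do not change the argument.
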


\begin{proof}
Assume that the $k$\textsuperscript{th} task type is the largest task type. That is, $e_{kj}$ is the maximum value in the $j^{th}$ column of $EET$. Then, for a homogeneous workload that contains only the type of task $T_k$, the total time consumed by the machine $M_j$ to perform those tasks is $c\times e_{kj}$. However, for a heterogeneous workload with $\omega_i$ as the proportion of each task type to the total number of tasks, the total time required to execute each task type by machine $M_j$ is $c \times \omega_i \times e_{ij}$. Thus, the total time consumed to process all tasks is $\sum_{i=1}^{m}c\cdotp \omega_i\cdotp e_{ij}$. Then, the speed-up of executing the heterogeneous workload, as opposed to the homogeneous workload of type $T_k$, is determined based on Equation~\ref{eq:lem2_st1}.

\begin{equation}\label{eq:lem2_st1}
   \Gamma_T = \frac{c \cdotp e_{kj}}{\sum\limits_{i=1}^{m}c\cdotp \omega_i\cdotp e_{ij}} = \frac{1}{\sum\limits_{i=1}^{m} \omega_i\cdotp \frac{e_{ij}}{e_{kj}}} 
\end{equation}

Based on Equation~\ref{eq:st}, we know that $\beta^i_{(j)} = \frac{e_{kj}}{e_{ij}}$. Additionally, the weighted harmonic mean of the column speed-up vector due to task heterogeneity, denoted by $\widebar{\beta}_{(j)}^H$, is calculated based on Equation~\ref{eq:lem2_st2}.
\begin{equation}\label{eq:lem2_st2}
   \widebar{\beta}_{(j)}^H = \frac{1}{\sum\limits_{i=1}^{m}w_i \frac{1} {\beta^i_{(j)}}} 
\end{equation}

Finally, based on Equations~\ref{eq:lem2_st1} and \ref{eq:lem2_st2}, we prove that the true speed-up due to task heterogeneity is the weighted harmonic mean of $\Vec{\beta_j}$ as noted in Equation~\ref{eq:lem2_conclusion}.
\begin{equation}\label{eq:lem2_conclusion}
    \Gamma_T = \frac{1}{\sum\limits_{i=1}^{m} \omega_i\cdotp \frac{1} {\beta^i_{(j)}}} = \widebar{\beta}_{(j)}^H
\end{equation}    
\end{proof}

\subsection{Homogeneous Equivalent Execution Time (HEET) Measure}
From Lemmas~\ref{lemma_sm_avg} and~\ref{lemma_sm_h}, we learn that, for a high arrival rate, the arithmetic mean represents the mean speed-up ($\widebar{\alpha}^{(i)(A)}$), whereas, for a low arrival rate, harmonic mean ($\widebar{\alpha}^{(i)(H)}$) is representative. However, we are typically interested in studying system efficiency under high arrival rates. That is why, in the rest of this paper, we use the arithmetic mean to represent the speed-up due to machine heterogeneity. Moreover, Lemma~\ref{lemma_st_avg}, proves that the weighted harmonic mean represents the mean speed-up due to task heterogeneity. Taking these into account, as shown in the example of Figure~\ref{fig:summary_method}, we can reduce each column vector $\Vec{\beta}_{(j)}$ to its mean value $\widebar{\beta}_{(j)}^H$ to construct a row vector, denoted $\widebar{\beta}^H = [\widebar{\beta}_{(1)}^H, \widebar{\beta}_{(2)}^H, ..., \widebar{\beta}_{(n)}^H]$, whose contents summarize the execution time behavior of all types of tasks into an equivalent hypothetical task type (denoted $T^*$). We use Equation~\ref{eq:lem2_conclusion} to determine $\widebar{\beta}_{(j)}^H$ for machine type $j$. Note that the mean speed-up due to task heterogeneity for machine type $j$ is calculated with respect to the slowest task type for machine type $j$ (homogeneous counterpart). Therefore, to determine the expected execution time of $T^*$ in the machine type $j$, we consider that $T^*$ is $ \widebar{\beta}_{(j)}^H$ times faster than the slowest task type on machine type $j$. The resultant row vector of the expected execution time of $T^*$ in machine types describes the machine heterogeneity. In a similar approach, we can aggregate machine types into a single hypothetical machine type (denoted $M^*$) whose execution time behavior is representative of the entire set of machine types in the heterogeneous system. To this end, we construct the speed-up vector due to machine heterogeneity for $T^*$, denoted by $\Vec{\alpha}^{(*)}$, based on Equation~\ref{eq:alpha_star}.

\begin{equation} \label{eq:alpha_star}
    \alpha^{(*)}_j = \frac{\max\limits_{i=1}^m e_{ij}}{\widebar{\beta}^H_{(j)}}
\end{equation}

Then, we use the arithmetic mean of $\Vec{\alpha}^{(*)}$ to determine the mean speed-up due to machine heterogeneity, denoted by $\widebar{\alpha}^{(*)(A)}$ for $T^*$. In fact, $\widebar{\alpha}^{(*)(A)}$ represents how much $M^*$ is faster than the slowest machine type for the hypothetical equivalent task type ($T^*$). As a result, we can represent the execution time behavior of the heterogeneous computing system with Homogeneous Equivalent Execution Time, \SEE, using the expected execution time of $T^*$ on $M^*$ as follows:

\begin{equation}\label{eq:measure}    
    \textit{\SEE} = \frac{\max\limits_{j=1}^{n} \alpha^{(*)}_j}{\widebar{\alpha}^{(*)(A)}}
\end{equation}

For the sake of clarity, we use Figure~\ref{fig:summary_method} to illustrate the derivation of \SEE using an example. In Stage (a) of the figure, an EET matrix is considered. Then, in Stage (b), the EET matrix is used to derive the speed-up matrix due to task heterogeneity based on Equation~\ref{eq:st}. Column $j$ of this matrix demonstrates the speed-up due to task heterogeneity for machine type $j$, denoted by $\Vec{\beta}_{(j)}$. Next, in Stage (c), we employ harmonic mean (Equation~\ref{eq:lem2_st2}) to represent each column vector $\Vec{\beta}_{(j)}$ in the form of a scalar value. In this way, the set of speed-up vectors due to task heterogeneity (that constructs a matrix) is reduced to a row speed-up vector. In Stage (d), we calculate the expected execution time of $T^*$ on machines by using the execution time of the slowest task type and $\widebar{\beta}^H_{(j)}$. In Stage (e) we determine the speed-up vector due to machine heterogeneity for $T^*$, based on Equations~\ref{eq:sm}. In stage (f), we use the arithmetic mean to determine the mean speed-up due to machine heterogeneity for $T^*$. Lastly, in stage (g), we calculate the \SEE measure by considering the speed-up value obtained in stage (f) considering the execution time of the slowest machine type for $T^*$ ($M_1$ in stage (d)) as the baseline.

\begin{figure}[th]
    \centering
    \includegraphics[width=0.5\textwidth]{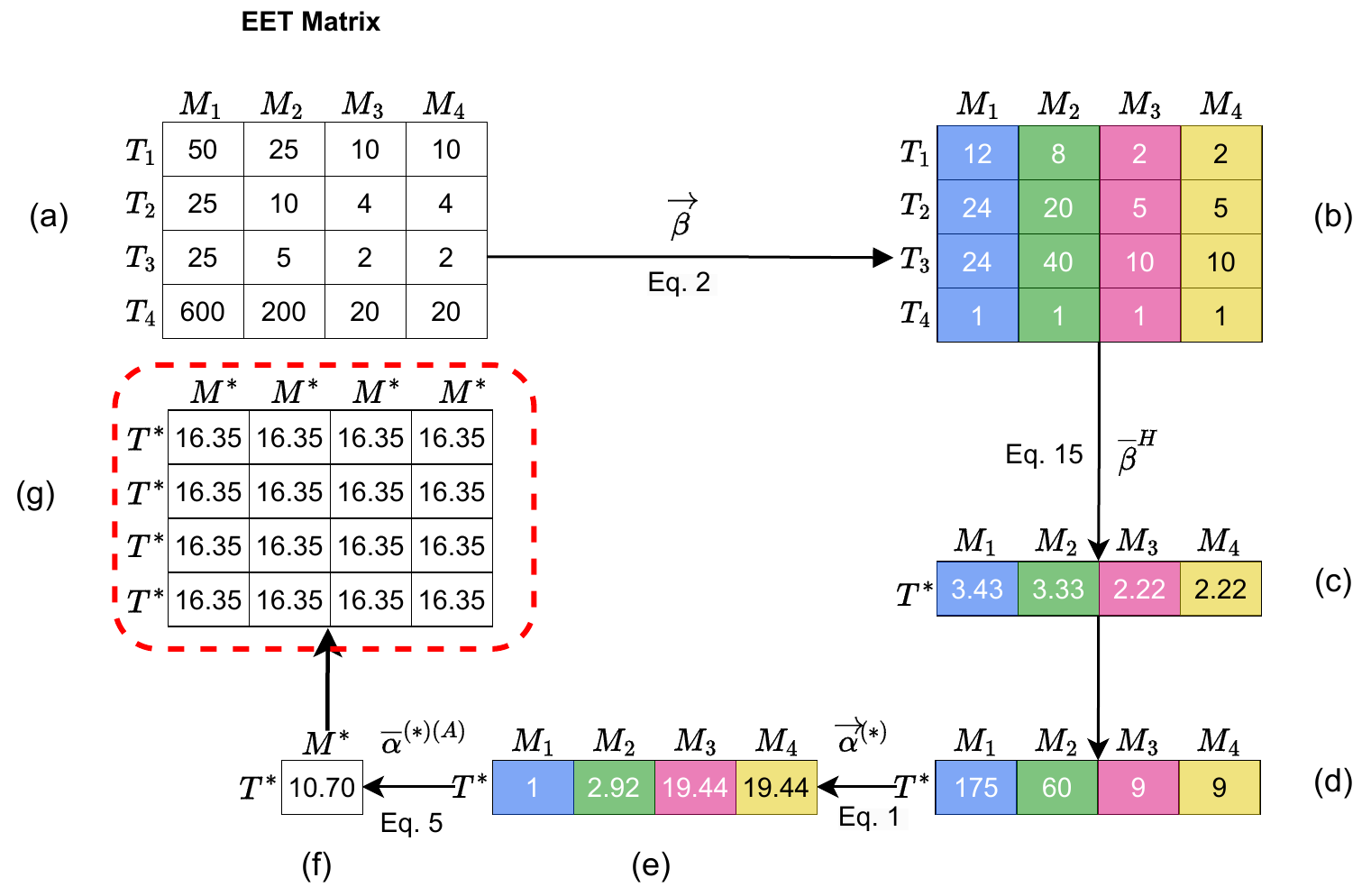}
    \vspace{-8mm}
     
    \caption{An example illustrating the stages to calculate the heterogeneity measure (\SEE). (a) The EET matrix representing a heterogeneous system. (b) The speed-up matrix due to task heterogeneity is derived from EET matrix based on Equations~\ref{eq:st}. (c) Based on Equation~\ref{eq:lem2_st2}, we consolidate each column into a mean value using the harmonic mean. (d) Calculate the expected execution time of $T^*$ on machines using the execution time of the slowest task type and $\overline{\beta}^H_j$. (e) The speed-up vector due to machine heterogeneity for $T^*$, based on Equations~\ref{eq:sm}. (f) we employ Equation \ref{eq:alpha_A} on the resultant speed-up vectors to determine the mean speed-up value due to machine heterogeneity. (g) \SEE score represents the execution behavior of the heterogeneous system.}    
    \label{fig:summary_method}
    \vspace{-2mm}
\end{figure}
While \SEE is able to accurately characterize the system heterogeneity, our hypothesis is that, for a given workload, systems with similar \SEE scores exhibit similar makespan too. 
For a given workload trace with $T$ task types, system $A$ with the set of heterogeneous machine types $M_A$ offers a bigger makespan than heterogeneous system $B$ with machine types $M_B$ ($|M_A|=|M_B|$), if and only if $\SEE_A > \SEE_B$.

\subsection{\SEE: Space Mapping From Heterogeneity to Homogeneity}
As shown in Figure~\ref{fig:summary_method}, the mathematical approach employed to obtain \SEE is actually to transform the EET matrix that represents the heterogeneous system (stage (a)) into a homogeneous EET matrix whose elements are values \SEE (stage (g)). In other words, we proposed a mathematical formulation that transforms a heterogeneous computing system into a hypothetical homogeneous system such that both perform similarly in terms of performance metrics such as makespan. 

Recall that the \SEE metric is the expected execution time of the hypothetical equivalent task type ($T^*$) on the hypothetical equivalent machine type ($M^*$). Replacing the machine types with $M^*$ and task types with $T^*$, results in a homogeneous EET matrix, denoted by $EET^*$, whose elements are \SEE values. The lemmas~\ref{lemma_sm_avg} and \ref{lemma_st_avg} prove that the homogeneous system represented by the matrix $EET^*$ exhibits a similar makespan as the heterogeneous system represented by the EET matrix. Given a workload of $c$ of tasks, its makespan, denoted by $\tau$, on a heterogeneous computing system can be estimated by the makespan of the same workload on the homogeneous equivalent system represented by $EET^*$. For a homogeneous system of $n$ machines ($M^*$), the expected makespan of executing $c$ tasks of the same type ($T^*$) is the number of tasks distributed on each machine ($\frac{c}{n}$) multiplied by the expected execution time of that task type on the machine type (\ie \SEE value).  As a result, the makespan is determined as follows:
\begin{equation} \label{eq:tau}
    \tau = \frac{c}{n} \cdot \textit{\SEE}
\end{equation}

Similarly, we can use the \SEE score to estimate the system throughput. To do this, we estimate the throughput metric, denoted by $\theta$,  as the ratio of the number of tasks to the time it takes to complete those tasks on the machines. Thus, the throughput is determined as follows:

\begin{equation} \label{eq:throughput}
    \theta = \frac{n}{\textit{\SEE}}   
\end{equation}

\section{System Design}
\label{system-design}

In this section, we present an overview of the components that we have designed to evaluate the \SEE metric. Our implementation includes a real-world end-to-end ``inference system'', customized to match real-world production scenarios \cite{salesforce-blog-2020}. 
Throughout our experimentation, we used AWS EC2 instances as machines. Note that, with slight modifications, the system can also be readily deployed on alternative cloud platforms. 
The primary objective of our system is to validate the precision of the estimated makespan based on Equation~\ref{eq:tau} by comparing them against the actual makespan in various configurations of the system. Figure~\ref{fig:system_design} illustrates the overall architecture of the system. The components of the system are explained in the following paragraphs.

\noindent \textbf{Model Loader} To encompass a diverse range of model types, we used the extensive model repository offered by HuggingFace \cite{wolf2019huggingface}. We used four different models in our experiment: (1) Image classification: Resnet50 \cite{he2016deep}, (2) Object detection: Yolov5 \cite{Jocher_YOLOv5_by_Ultralytics_2020}, (3) Question answering: DistilBERT \cite{sanh2019distilbert}, and (4) Speech recognition: Wav2vec2 \cite{baevski2020wav2vec}. Given the variety of deep learning frameworks from which these models were sourced, we sought consistency in our experiments. To achieve this, we converted all models to the ONNX (Open Neural Network Exchange) format \cite{onnx-github} using the PyTorch ONNX converter \cite{onnx-conversion}. Utilizing ONNX grants us the advantage of a unified model server setup, applicable across all model types.

\noindent \textbf{Model Server} Each of the models used during the experiments should be deployed as a machine learning service. We have used the multi-model serving capability of modern inference systems \cite{multi-model-serving} to encapsulate multiple models into a single inference service. Each of the services is supported by an AWS EC2 instance. On each machine, we spin up a containerized version of NVIDIA Triton Inference Server \cite{triton-inference-server} and load all model variants onto it. Communications to model servers are implemented using gRPC \cite{grpc} due to its superior performance compared to other transfer protocols \cite{grpc-vs-rest}.

\noindent \textbf{Workload}  We synthesized the workload traces assuming that the interarrival time between tasks in the workload traces follows an exponential distribution with the mean arrival rate as its parameter~\cite{harchol2013performance}. According to the bag-of-task assumption \cite{mokhtari2022felare}, we have also designed a workload of tasks that are all available for execution from the beginning (\ie arrival time is zero). Tasks are sent asynchronously to the machines that host the ML services based on their arrival times.

\begin{figure}
    \centering
    \includegraphics[width=0.49\textwidth]{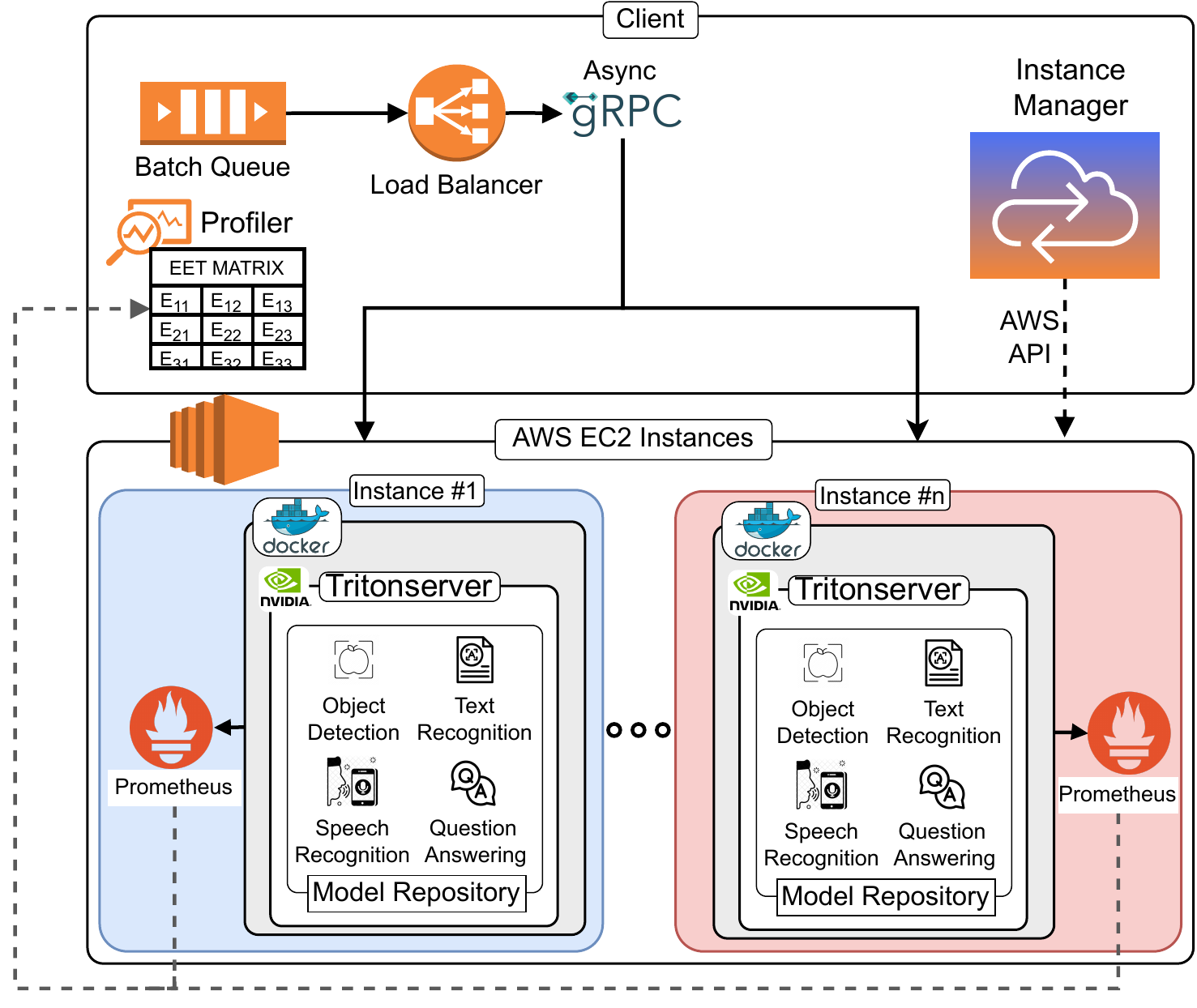}
    \vspace{-5mm}
    \caption{Performance comparison of the same set of scheduling methods with the same workload across two computing systems, A and B, with different levels of heterogeneity (horizontal axis). The vertical axis shows the percentage of tasks completed on time.}
    \label{fig:system_design}
    \vspace{-6mm}
\end{figure}

\noindent \textbf{Monitoring} The monitoring component in each of the EC2 instances is equipped with Prometheus \cite{prometheus}, a monitoring system and a time series database. It records the inference time of all model servers during experiments and is stored in the database for later analysis. To construct the EET matrix, Profiler benchmarks each task type with the machine types and employs Prometheus to obtain the expected inference times.

\noindent \textbf{Instance Manager} During experiments, it is necessary to reconfigure heterogeneous computing systems with a different number of instances of each type. To support this, we have automated the process of reconfiguring the system in a central instance manager. The process of transitioning between two instance configurations includes (1) removing the current instances and cleaning the cluster, (2) setting up the new sets of machines with required dependencies, (3) bringing up the Triton container on top of the EC2 instances, and finally loading the models to the Triton inference server. We have automated all steps 1-4 using the AWS Python SDK \cite{aws-sdk-python}. Furthermore, the types of EC2 instance that we used during our experiments are (1) t2.large, (2) c5.2xlarge and (3) g4dn.xlarge. These instance types are similar to slow, medium, and fast machines in inferring the selected machine learning tasks. 

\noindent \textbf{Load Balancer}
Tasks are queued in a single unbounded FCFS queue upon arrival. Then, the load balancer assigns tasks to available machines in a round-robin manner. Mapping events are triggered by the completion or arrival of a task. In case a task arrives at the system while there is no available machine, the load balancer defers the mapping event until a machine becomes free.

\section{Experimental Validation} \label{experiments}

\subsection{Experimental Setup} \label{exp_setup}

To validate the developed heterogeneity measure, in this section, we execute a workload of four deep learning applications on various combinations of three types of Amazon EC2 instances (machine) to verify the \SEE score in real-world settings. Specifically, we used four different applications/models in our experiment: (1) image classification implemented using the Resnet50 model \cite{he2016deep}, (2) object detection implemented according to the Yolov5 model \cite{Jocher_YOLOv5_by_Ultralytics_2020}, (3) question answering based on the DistilBERT model \cite{sanh2019distilbert}, and (4) speech recognition using the Wav2vec2 model \cite{baevski2020wav2vec}. For machine types, we utilize GPU-based (\texttt{g4dn.xlarge}), compute-optimized (\texttt{c5.2xlarge}), and general-purpose CPU-based (\texttt{t2.large}) Virtual Machines offered by AWS EC2 services. To obtain the expected execution time of the image classification task on these machines, we processed 1000 sample images on each instance type. We repeat the experiment 10 times, and finally, we use the expected value of these 10,000 inference operations in the EET matrix. Similarly, for the object recognition task, we ran the object recognition task for 1000 sample images 10 times to determine the expected execution time of the object recognition task. For the speech recognition task type, we execute a recorded audio of length 4 seconds on the machine types. Then, the average value of the inference times is used to fill the EET matrix. For question answering, we provide a sample context and question as input of the inference task and run the inference task 1000 times.  We aggregated the inference times and determined the expected execution time for use in the EET matrix. 

To synthesize the workload trace, we assume that the inter-arrival time between tasks in the workload traces follows an exponential distribution with the mean arrival rate as its parameter~\cite{harchol2013performance}. For the bag-of-tasks, we assume that all tasks are available for execution from the beginning (arrival time is zero). Tasks are queued in a single unbounded FCFS arrival queue upon arrival, and assigned to the available machines in the round-robin manner. Tasks are considered to be latency-sensitive with hard deadlines. The performance metric (makespan) is defined as the time that the system requires to complete all tasks in the workload trace.

Recall that we employed hybrid central tendency measures (column-wise harmonic mean and row-wise arithmetic mean) to obtain a meaningful representative of the execution time behavior of heterogeneous computing systems. To illustrate the effectiveness of our method, we compare the experimental results with the following baselines as representatives of the execution time behavior of heterogeneous systems: (1) Arithmetic mean, (2) Harmonic mean, and (3) Geometric mean of the elements of the EET matrix. 





\subsection{Validating \SEE Score as a Performance-Driven Heterogeneity Measure} \label{sec:exp1}
In this experiment, our goal is to assess the ability of the \SEE  score to estimate the true throughput and the true makespan of the heterogeneous computing systems using Equations \ref{eq:throughput} and \ref{eq:tau}, respectively. To study the behavior of heterogeneous systems with varying \SEE scores, we simulate a variety of heterogeneous systems with three types of machines and four types of tasks. To this end, we generate 228 different system configurations with a different number of instances of each type (\ie \texttt{t2.large}, \texttt{c5.2xlarge}, and \texttt{g4dn.xlarge}). Then, for each system configuration, we feed it with a bag of 1000 tasks of four types (\ie image classification, object detection, question answering, and speech recognition). Eventually, we average the makespan and throughput on different system configurations with the same \SEE score. 

\begin{figure*}[!t]
\centering
\subfloat[]{\includegraphics[width=0.41\textwidth]{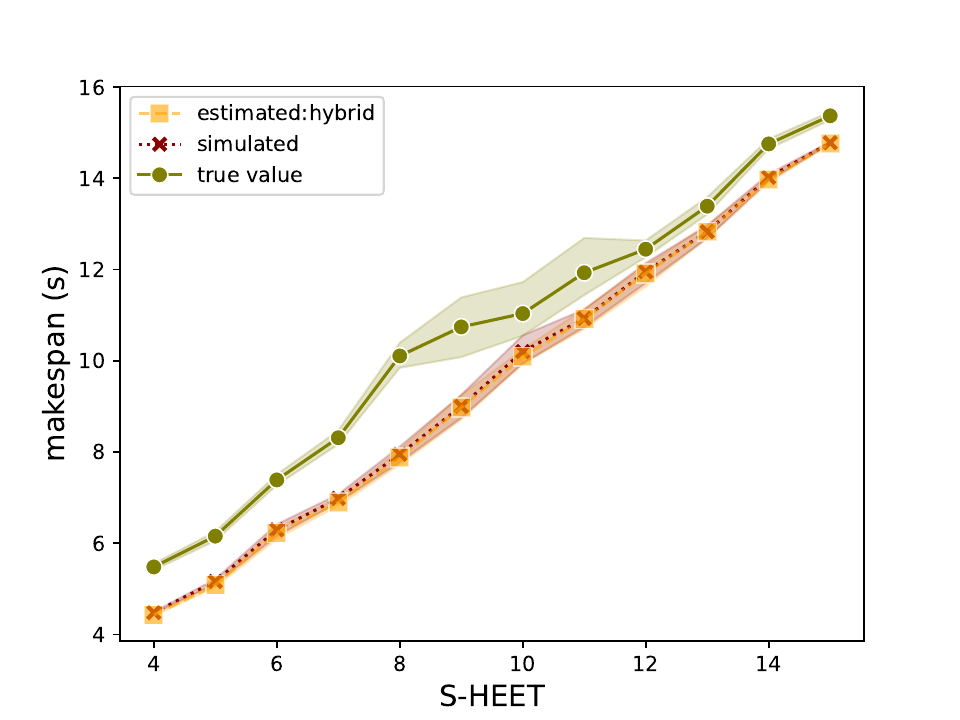}%
\label{fig:see_makespan_mixed}}
\hfil
\subfloat[]{\includegraphics[width=0.41\textwidth]{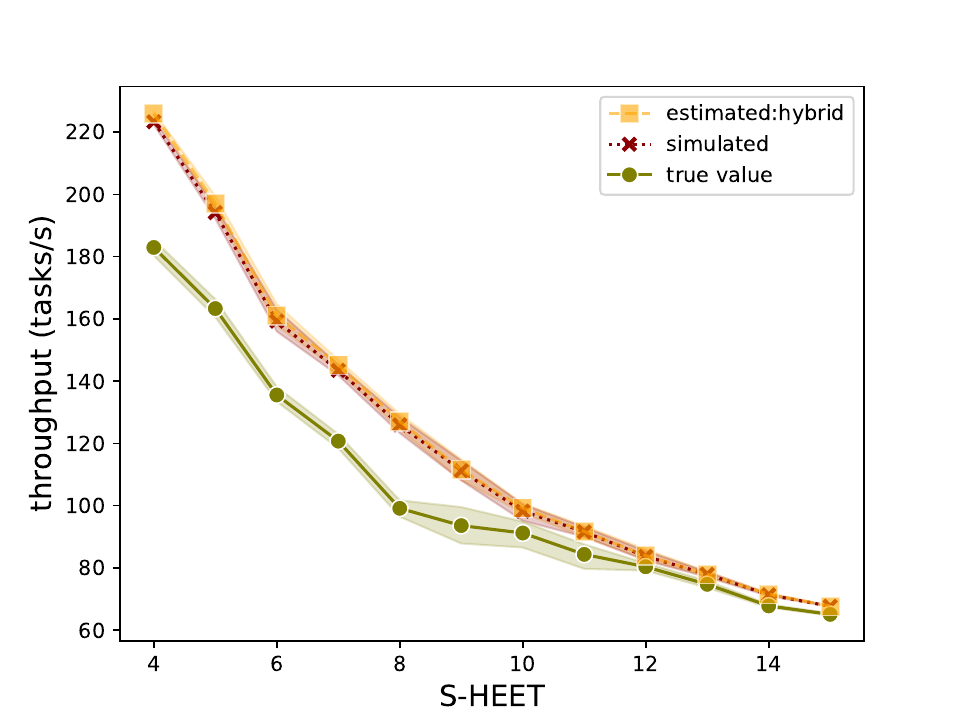}%
\label{fig:see_throughput_mixed}}
\caption{The system performance in terms of the makespan (left) and throughput (right) (vertical axis) upon varying S-\SEE  scores (horizontal axis) for workloads 1000 tasks. Each individual point represents the average result of multiple computing systems with the same S-\SEE  score. Furthermore, the colored area illustrates the 95\% confidence interval of the results.}
\label{fig:see_mixed}
\vspace{-6mm}
\end{figure*}


Figure~\ref{fig:see_mixed} shows the results of the true and estimated makespan and throughput with varying S-\SEE score, which is the \SEE score scaled with the number of machines (\ie $\frac{\SEE}{n}$). Each (makespan, S-\SEE)/(throughput, S-\SEE)  point shows the average makespan/throughput over different system configurations with the same S-\SEE value. The colored area illustrates the 95\% confidence interval of the results. As shown in Figure~\ref{fig:see_mixed}, we can observe that systems with lower S-\SEE scores generally perform better (\ie lead to a smaller makespan or higher throughput) than those with higher \SEE score values. This statement itself means that the S-\SEE score can be effectively used as a measure to ``\emph{compare different heterogeneous computing systems}" in terms of makespan or throughput. Moreover, we can observe that the results exhibit a narrow confidence interval, that is, systems with the same S-\SEE score will perform similarly in terms of makespan and throughput. The number of different configurations with S-\SEE scores equal to 9, 10, or 11 is small (less than 9), therefore we observe a wider confidence interval for these S-\SEE scores in the results. 

In addition, the results show that the estimated values (makespan and throughput) using the \SEE score based on Equations \ref{eq:throughput} and \ref{eq:tau}, respectively, can predict the true values with an average accuracy of 84\%. Note that we used the expected values of the execution times to determine the \SEE score. As a result, the accuracy of the estimation depends on the degrees of uncertainty that exist in the execution times. Accordingly, we ran a simulation with zero uncertainty in expected execution times to demonstrate the root cause of the error in estimating the makespan and throughput using the \SEE score. The results show that the makespan calculated using the \SEE score accurately estimates the makespan of the simulation. Thus, we can say that in heterogeneous systems with low levels of uncertainty in execution times, estimating makespan using \SEE score is an accurate method. 

Given a user-defined throughput threshold, we can determine the corresponding \SEE  score and use it to configure a heterogeneous system with the desired throughput. For a desired throughput, the \SEE  score enables solution architects to proactively configure a heterogeneous system that can meet that objective (instead of try and error) with minimum cost.

In summary, the result of this experiment validates the applicability of the \SEE score for real-world scenarios. In particular, when the \SEE score is applied across systems, it can accurately compare different heterogeneous systems with respect to their performance metrics (makespan and throughput) without examining the workload in these systems. 

\subsection{Correctness of arithmetic, harmonic, and geometric means as heterogeneity measures}
In this experiment, we investigate the effectiveness of the arithmetic, harmonic, and geometric mean of expected execution times of task types in machine types as heterogeneity measures representing the execution behavior of heterogeneous systems. To this end, we conducted a similar experiment as in Section~\ref{sec:exp1} to study the performance (in terms of makespan) of 228 heterogeneous computing systems. An appropriate heterogeneity measure should be able to identify similarity and superiority in terms of performance metrics (\eg makespan or throughput) across heterogeneous systems. 

\begin{figure*}[!t]
\centering
\subfloat[]{\includegraphics[width=0.33\textwidth]{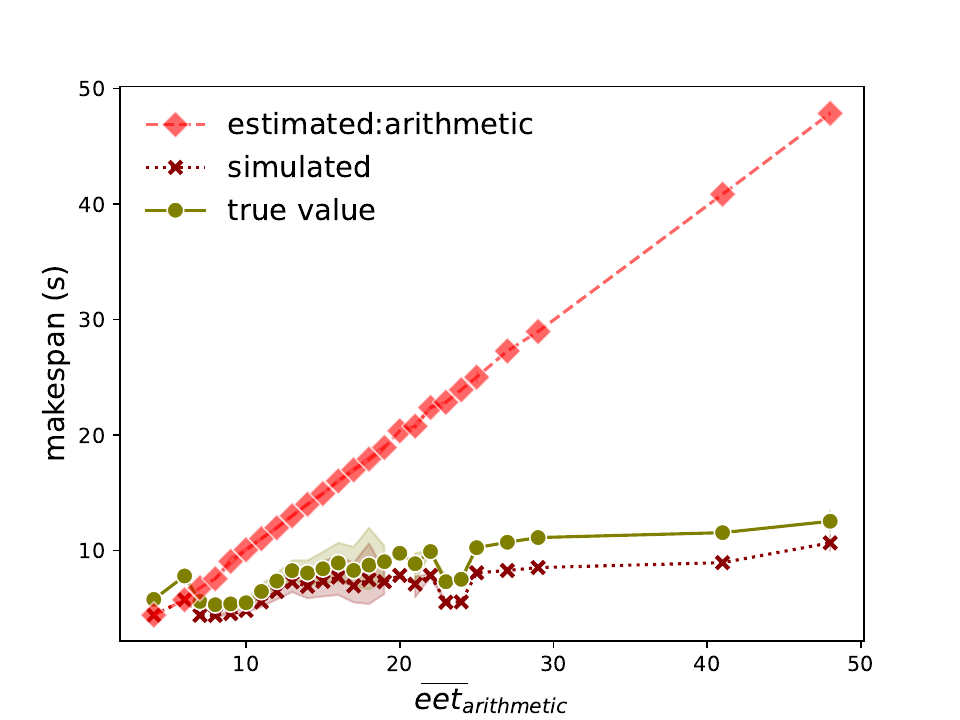}%
\label{fig:arithmetic}}
\hfil
\subfloat[]{\includegraphics[width=0.33\textwidth]{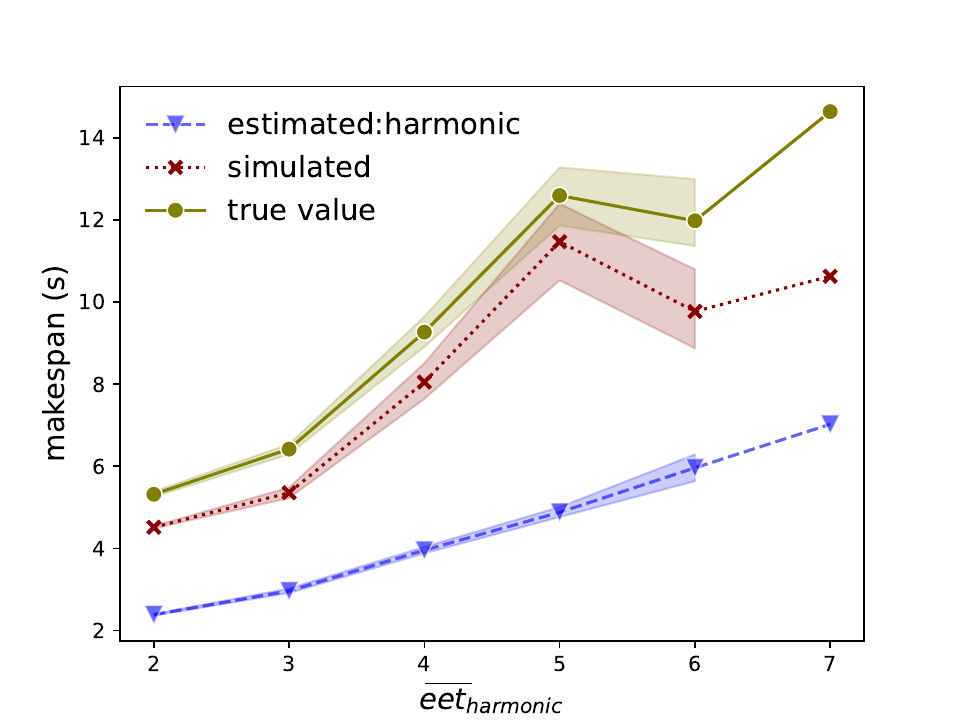}%
\label{fig:harmonic}}
\hfil
\subfloat[]{\includegraphics[width=0.33\textwidth]{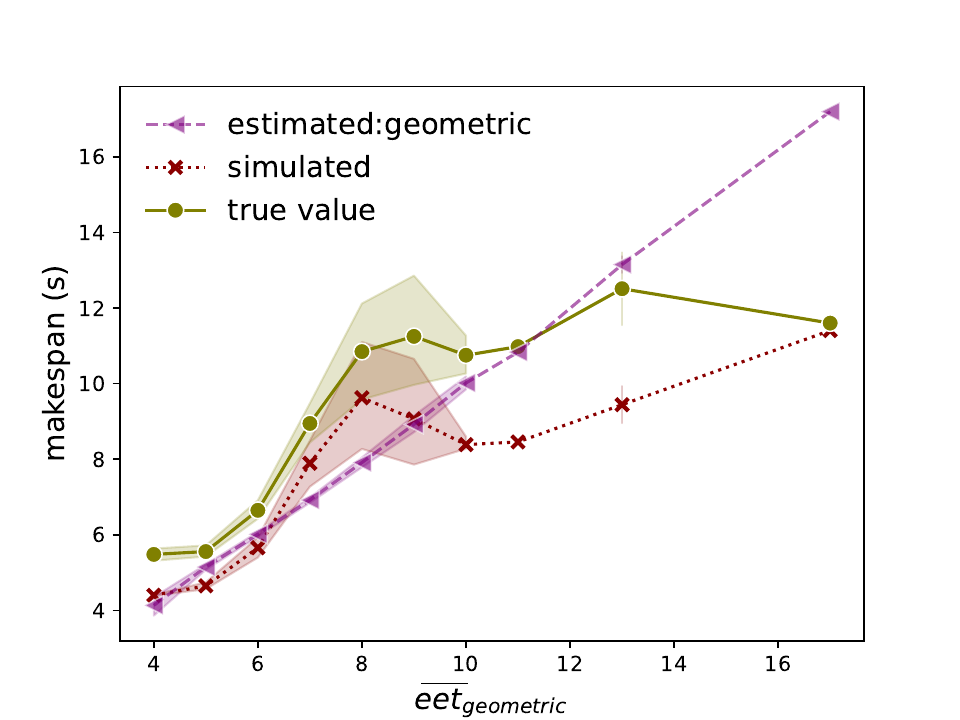}%
\label{fig:geometric}}

\caption{The system performance in terms of the makespan (vertical axis) with respect to the varying mean of the EET matrix (horizontal axis) for a workload of 1000 tasks. In (a)--(c), $\overline{eet}_{\textrm{arithmetic}}$, $\overline{eet}_{\textrm{harmonic}}$, and $\overline{eet}_{\textrm{geometric}}$ are the arithmetic, harmonic, and geometric mean of the expected execution times of task types on machines types as represented in the EET matrix. Each individual point represents the average result of multiple computing systems with the same mean EET value. Additionally, the colored area illustrates the 95\% confidence interval of the results.}
\label{fig:baselines}
\vspace{-6mm}
\end{figure*}

     

Figure~\ref{fig:baselines} shows the results of the makespan for different heterogeneous systems with varying mean expected execution times calculated based on
arithmetic ($\widebar{eet}_{arithmetic}$), harmonic ($\widebar{eet}_{harmonic}$), and geometric ($\widebar{eet}_{geometric}$) mean techniques.
In Figure~\ref{fig:arithmetic}, the results show that the estimated makespan using $\widebar{eet}_{arithmetic}$ cannot follow the true makespan and is considerably inaccurate. Similarly, we observe that the harmonic mean and geometric mean heterogeneity measures are also inaccurate.

In summary, comparing the results for the baseline heterogeneity measures, as shown in Figure~\ref{fig:baselines}, with the results for the \SEE score, as shown in Figure~\ref{fig:see_mixed}, verifies that heterogeneous systems are well characterized with the \SEE measure and can be used to accurately estimate the performance of heterogeneous systems.

\section{Related works}\label{related_works}
Heterogeneous computing systems utilize various computing machines to perform diverse tasks with different computational requirements. The idea of exploiting system heterogeneity to improve system performance considering different objectives such as energy~\cite{hussain2021energy,panda2019energy,chen2020joint, ghafouri2020consolidation, mokhtari2022felare} and QoS~\cite{azizi2022deadline, hussain2022hybrid, mokhtari2020autonomous, denninnart2020efficient} has been extensively explored in the literature. Based on these works, it is proven that heterogeneity can play a crucial role in enhancing different system performance metrics; however, these works fall short of providing a concrete metric that can explain the impact of heterogeneity on system performance. Moreover, these works commonly try to exploit heterogeneity in favor of optimizing a performance metric. In contrast, our work takes a different approach such that for a desired performance metric it tries to configure the heterogeneity of the system and its impact on the performance metric.

\textbf{Expected Time to Compute (ETC) Matrix} The idea of characterizing a heterogeneous computing system using the Expected Time-to-Compute (ETC) matrix was first explored by Ali et al.~\cite{ali2000representing}. They used the coefficient-of-variation of expected execution times as a measure of heterogeneity. Then, they suggested an algorithm that takes the mean and standard deviation of execution times to generate the ETC matrix of the heterogeneous system. However, their method neither characterizes the performance of different heterogeneous systems nor makes them comparable.  In contrast, we present a mathematical model to measure system heterogeneity such that it can characterize the overall system performance behavior and make different systems comparable.

\textbf{Heterogeneity-aware Task Scheduling} Several research works have been carried out on heterogeneity-based scheduling algorithms. Panda et al.~\cite{panda2019energy} introduced an energy efficient task scheduling algorithm, called ETSA, for heterogeneous cloud computing to minimize energy consumption and makespan (\ie the total time to complete a workload). In the proposed algorithm, the trade-off between minimizing energy consumption and maximizing system performance is achieved by minimizing the linear combination of normalized completion time and total utilization. In their work, the heterogeneity is represented by the Expected Time to Compute (ETC) matrix whose entries show the expected execution time of a particular task type on a specific machine type. Although the proposed scheduling solution has been evaluated in different heterogeneous computing systems, the impact of heterogeneity on system performance has not been discussed. 
Moreover, Mokhtari \etal~\cite{mokhtari2022felare}, leveraged the ETC matrix to model the performance behavior of heterogeneous computing systems. They introduced a fair and energy-aware scheduling algorithm, called FELARE, for latency-sensitive tasks in heterogeneous edge systems. To improve fairness across task types, FELARE monitors the performance of the heterogeneous system, and based on the defined fairness metric, it mitigates the suffered tasks by prioritizing them in the next mapping events.

In~\cite{denninnart2020efficient, mokhtari2020autonomous}, the authors followed a probabilistic approach to design a robust resource allocation algorithm for heterogeneous computing systems. 
In these works, a modified version of the ETC matrix was employed that contains the distribution of execution time for each task type on a machine type to model the heterogeneity.
Narayanan \etal~\cite{narayanan2020heterogeneity} proposed a throughput matrix to model the performance behavior of the system. Specifically, each entry $(i,j)$ in the matrix represents the performance of the job $i$ on the machine $j$. The matrix also implies the system heterogeneity; hence, they leveraged it to devise a heterogeneity-aware scheduling method that can be optimized for different performance metrics.

\textbf{Heterogeneity-aware Machine Learning Inference Serving Systems} 
Several research works have been conducted to devise heterogeneity-aware machine learning inference services considering performance objectives such as cost, QoS, or throughput. Ribbon \cite{ribbon} takes advantage of the cost and performance trade-off to serve deep learning inferences in heterogeneous AWS EC2 instances. They considered different system configurations (different numbers of AWS EC2 instances) as decision parameters. In their work, they assume that the performance metric (\ie QoS) of diverse configurations cannot be described mathematically. Thus, they employed Bayesian optimization techniques to find the optimal configuration that minimizes the cost of serving inference queries while meeting the QoS constraint. Cocktail \cite{cocktail} uses the modeling that ensembles heterogeneous model variants along with heterogeneity in hardware to improve accuracy and cost optimization. Kairos \cite{li2022kairos} is a deep learning inference serving framework that maximizes throughput under the cost budget and QoS constraint. For that purpose, they proposed a heterogeneity-aware query distribution mechanism that maximizes throughput. The core idea is to distribute queries across machines so that it maximizes the idle time in the future for all instances. In these works, it is assumed that the performance (\ie throughput) of a heterogeneous system cannot be mathematically described; however, in our work, we propose an analytical approach to accurately estimate the throughput of a heterogeneous system. This would help to find the optimal configuration that minimizes the cost of service while meeting the throughput target.

In sum, all these works testify that heterogeneity can be employed to improve the system's performance. However, there is yet to be a concrete way to characterize the impact of heterogeneity on the performance of the system. For this purpose, we provide a heterogeneity measure that characterizes the impact of heterogeneity on the performance of heterogeneous systems and can be used to estimate the performance (\ie throughput or makespan) of heterogeneous systems without online evaluations.

\section{Conclusion} \label{conclusion}

In this research, we provided a measure to quantify the heterogeneity of the system with respect to the performance metric (makespan or throughput) of the system and for a given set of task types. We characterize system heterogeneity by decoupling the heterogeneity into machine and task dimensions. To quantify the performance impact of each heterogeneity dimension, we devised a speedup vector due to machine heterogeneity and task heterogeneity. We proved that the mean speedup due to machine heterogeneity under high workload arrival is measured by the arithmetic mean. We also proved that harmonic mean can measure mean speedup due to task heterogeneity. 
Then, we leveraged the mean speedup due to task heterogeneity to reduce all task types to a hypothetical equivalent task type that can characterize the execution time behavior of the set of task types. Next, we employ the mean speedup due to machine heterogeneity to characterize the set of machine types into a single hypothetical equivalent machine type. 
Finally, we introduce the \SEEL \xspace (\SEE) score as a heterogeneity measure representing how fast a heterogeneous system is for a given set of task types. In this way, we transform a heterogeneous computing system into a hypothetical equivalent homogeneous system with similar performance metrics (makespan and throughput). \SEE can be used to globally compare the performance of different heterogeneous systems. We observe that the \SEE score is effective and can approximate the makespan and throughput with an average and minimum accuracy of 84\% and 80.0\%, respectively. For a desired throughput objective, the \SEE score enables solution architects to proactively configure a heterogeneous system (instead of try and error) that can fulfill that objective. In the future, we plan to incorporate probabilistic analysis into our analysis to capture uncertainties in execution times. Another avenue for future work is to devise a task scheduling metaheuristic based on the \SEE score.



\bibliographystyle{IEEEtran}
\bibliography{references}


\begin{IEEEbiography}[{\includegraphics[width=1in,height=1.25in,clip,keepaspectratio]{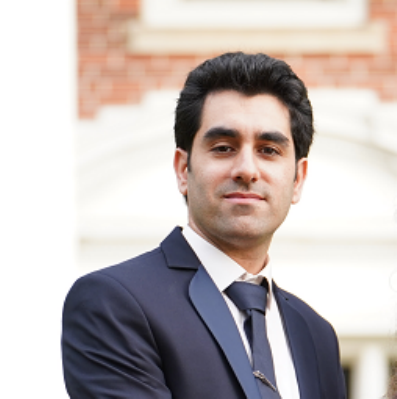}}]{Ali Mokhtari}
received the B.Sc. degree in mechanical engineering from Shiraz University, Iran, in 2010, the M.Sc. degree in aerospace engineering from the Sharif University of Technology, Tehran, Iran, in 2016, and the M.Sc. and Ph.D. degrees in computer science from the University of Louisiana at Lafayette, USA, in 2023. His research focuses on heterogeneous computing systems, including cost- and QoS-aware design of distributed systems and heterogeneity-aware resource allocation in distributed systems. \end{IEEEbiography}

\begin{IEEEbiography}[{\includegraphics[width=1in,height=1.25in,clip,keepaspectratio]{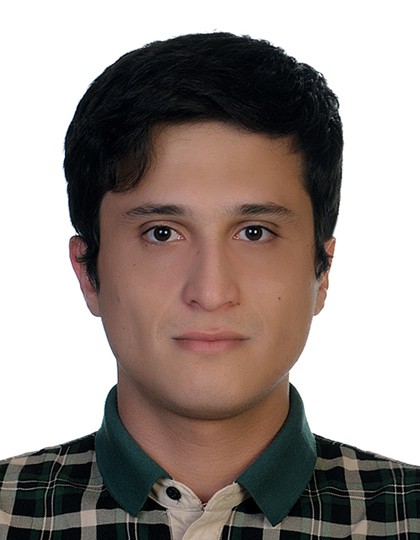}}]{Saeid Ghafouri}
holds a B.Sc. in computer software engineering from Shahid Chamran University of Ahvaz, Iran, and an M.Sc. in Artificial Intelligence from the Computer Engineering Department at K. N. Toosi University of Technology, Tehran, Iran. Currently pursuing a Ph.D. in computer science at Queen Mary University of London, his research focuses on cloud and edge computing, resource allocation in these domains, and reinforcement learning. Saeid recently completed a one-year internship at the AISys lab, University of South Carolina.
\end{IEEEbiography}

\begin{IEEEbiography}[{\includegraphics[width=1in,height=1.25in,clip,keepaspectratio]{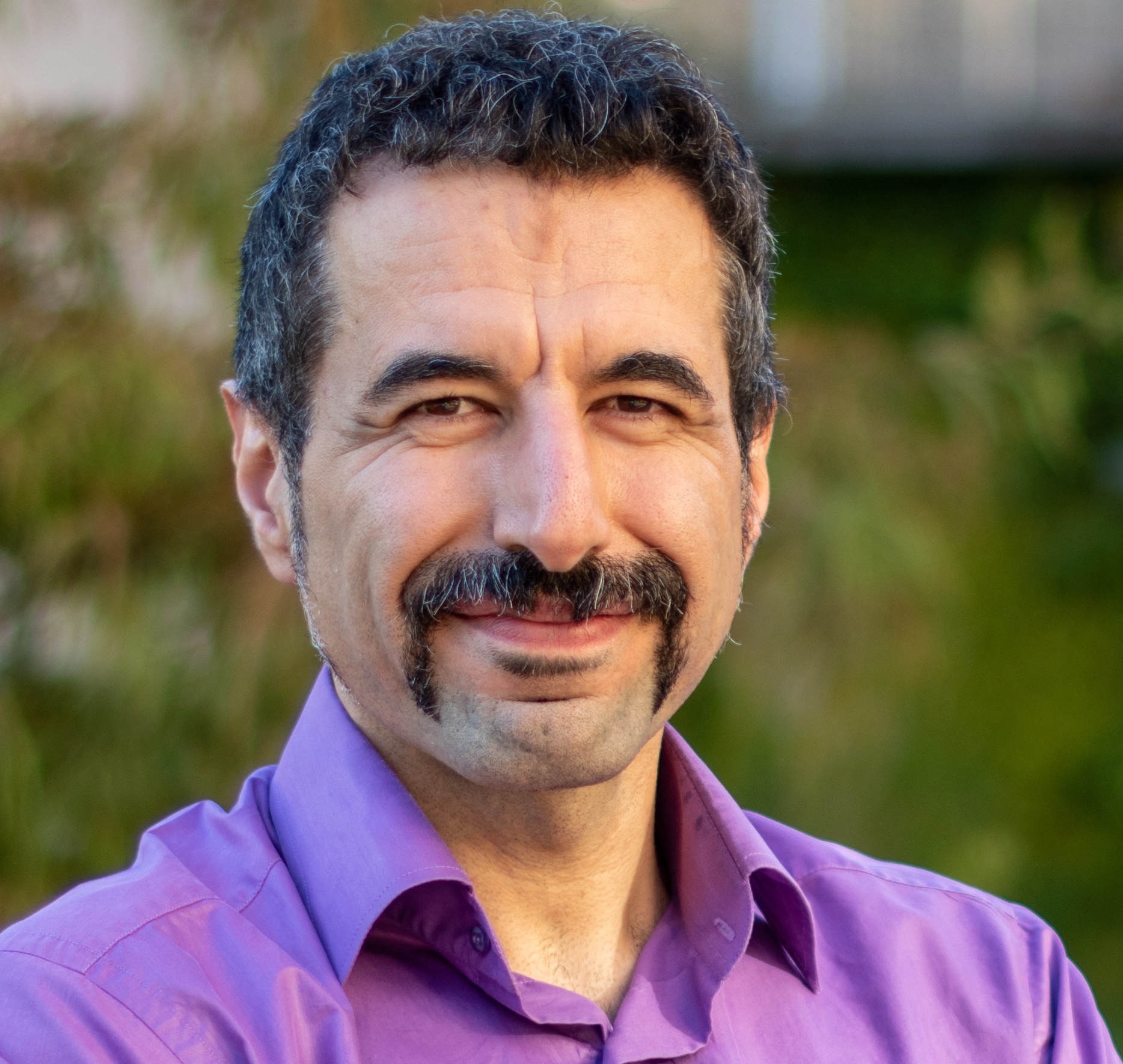}}]{Pooyan Jamshidi}
is an Assistant Professor in Computer Science at UofSC and a visiting researcher at Google. He directs the AISys Lab, where he develops theory in Causal AI and Statistical ML and applies the theories to solve problems in Computer Systems, Robot Learning, and Software Engineering. He is, in particular, interested in Causal Representation Learning, Transfer Learning, Reinforcement Learning, ML Security/Explainability, and ML for Systems. Prior to his current position, he was a research associate at CMU and Imperial. He received a Ph.D. in Computer Science at DCU in 2014 and M.S. and B.S. degrees in Systems Engineering and Computer Science from the AUT in 2003 and 2006.
\end{IEEEbiography}

\begin{IEEEbiography}[{\includegraphics[width=1in,height=1.25in,clip,keepaspectratio]{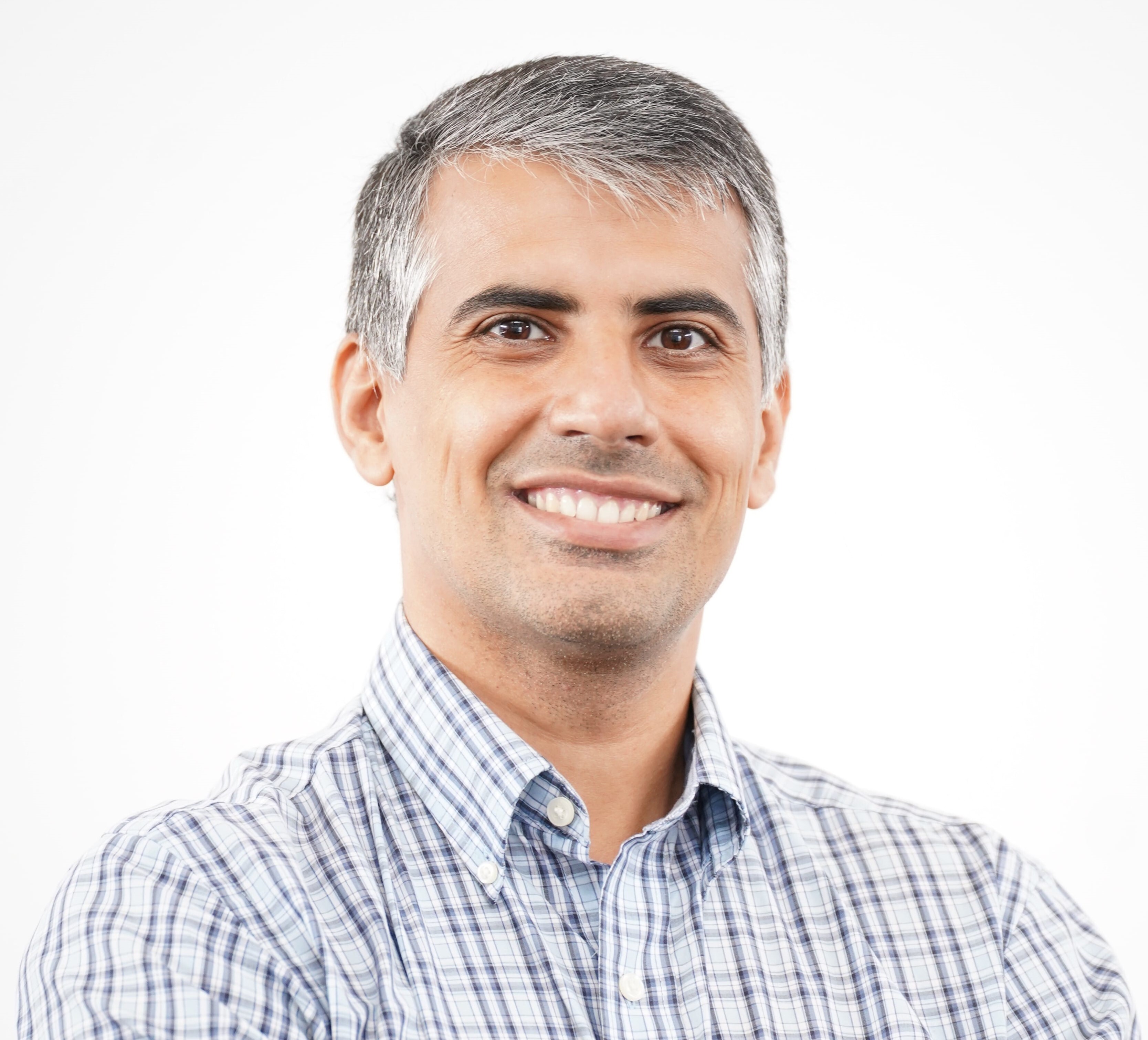}}]{Mohsen Amini Salehi}
received his Ph.D. in Computing and Information Systems from Melbourne University, in 2012. He is an NSF CAREER Awardee Associate Professor and the director of HPCC lab, at the department of Computer Science and Engineering, University of North Texas. His research focus is on the next generation of cloud computing systems, including edge-to-cloud continuum, heterogeneity, virtualization, and cloud AI.
\end{IEEEbiography}
\vfill
\end{document}